
\documentclass[letterpaper, 10 pt, conference]{ieeeconf}

\usepackage{graphicx}
\usepackage{epstopdf}
% \pdfminorversion=4
\IEEEoverridecommandlockouts
%----------------------------------------------------------%
%-----------------------  Header file  --------------------%
%----------------------------------------------------------%

%------from IEEE template--v
	\usepackage{generic}
	\usepackage{cite}
	\usepackage{amsmath,amssymb,amsfonts}  % same as we use
	\usepackage{anyfontsize}
	\def\BibTeX{{\rm B\kern-.05em{\sc i\kern-.025em b}\kern-.08em
			T\kern-.1667em\lower.7ex\hbox{E}\kern-.125emX}}
%	\markboth{Submitted to XXXX, \today}{Submitted to XXXX, \today}
	\markboth{\journalname, VOL. XX, NO. XX, XXXX 2023}
	{Restrepo \MakeLowercase{\textit{et al.}}: Simultaneous Topology Identification and Synchronization of Directed Dynamical Networks}
%------from IEEE template--^

% \IEEEoverridecommandlockouts
% \overrideIEEEmargins %With this it does not compile on the x270

%---layout stuff---v
	\parskip = 1pt 
	\allowdisplaybreaks
	\listfiles
%---layout stuff---^

%---- These lines are needed to activate hyperlinks, BEFORE \usepackage{hyperref} !!
% otherwise they do not work with ieeeconf.cls
	\makeatletter
	\let\NAT@parse\undefined
	\makeatother
	\usepackage[pdftex, pdfborderstyle={/S/U/W 0}]{hyperref}

%---packages for graphics & figures ----v
	\usepackage{graphicx,graphics,epsfig,color,wrapfig}
	\usepackage{float,subcaption}
	\usepackage{pgf,tikz,tikz-3dplot}
	\usepackage{tikzit}
	\usetikzlibrary{arrows,arrows.meta}
	\usetikzlibrary{positioning}
	\usepackage{pgfplots} 
	\usetikzlibrary{external}
	\tikzexternalize[prefix=Figures/]
	\newcommand{%
		\tikzsetnextfilename{}%
		\input{}%
	}[1]{%
		\tikzsetnextfilename{#1}%
		\input{#1}%
	}
%---packages for graphics & figures ----^

%--------other packages---------v
	\usepackage{threeparttable}
%	\usepackage[keeplastbox]{flushend}
%--------other packages---------^

%--- environments ---v
	\newtheorem{exe}{Example}
	
	\newtheorem{rem}{Remark}
	\newtheorem{thm}{Theorem}
	\newtheorem{lem}{Lemma}
	\newtheorem{prop}{Proposition}
	\newtheorem{corol}{Corollary}
	
	\newtheorem{defin}{Definition}
	\newtheorem{assump}{Assumption}

	\newenvironment{proposition}{\begin{prop}}{\hfill $\square$ \end{prop}}
	
	\newenvironment{remark}{\begin{rem}}{\hfill $\bullet$ \end{rem}}
	
	\newenvironment{lemma}{\begin{lem}}{\hfill $\square$ \end{lem}}
	\newenvironment{definition}{\vskip 3pt \begin{defin}}{\vskip 3pt \end{defin}}
	\newenvironment{assumption}{\vskip 3pt\begin{assump}}{\end{assump}\vskip 3pt}
%--- environments ---^

%---- Drafts ----v
%-----all these are only for the draft version -----v

	\usepackage{lipsum} 
	\newif\ifitsdraft
	
	\newif\ifitsSPL
	
	\newif\ifwithlualatex

	%\withlualatex   % <---- uncomment this line to compile with lualatex and tikz figures
	%\itsSPL
	%\itsdraft % <---- Uncomment to work in draft version
	
	  %--- command to remove material, eventually.
	%\def\toremove#1{}

	%--- some colours ---v
		\definecolor{myred}{rgb}{0.86,0.1,0.16}
		\definecolor{gray}{rgb}{0.3,0.3,0.3}\def\gray#1{{\textcolor{gray}{#1}}}
		\definecolor{lightgray}{rgb}{0.6,0.6,0.6}
		\definecolor{gray}{rgb}{0.33,0.4,0.47}\def\gray#1{{\color{gray}#1}}
		\definecolor{steelblue}{rgb}{0,.42,.7}
		\definecolor{britishgreen}{rgb}{0,0.26,0.15}
		\definecolor{navyblue}{rgb}{0,0,.8}
		\definecolor{olivegreen}{rgb}{0.14,0.29,0}
		\definecolor{myred}{rgb}{0.86,0.1,0.16}

	%--- some colours ---^

	\ifitsdraft %--- the following work only in Draft mode:
	
	\usepackage{refcheck}  
	\usepackage{soul}\onecolumn
	\usepackage[colorinlistoftodos,prependcaption,textsize=footnotesize,textwidth=2in]{todonotes}
	
	\onecolumn
	\textwidth=4.7in\oddsidemargin=0.3in\evensidemargin=0in%

	\newcounter{er}    \newcounter{al}    \newcounter{is}

	\newcommand{\rmk}[2]{
		\begin{quote}
			\color{navyblue} {\bf #1:} #2 \color{black}
		\end{quote}
	}

	\newcommand{\marginER}[2][0]%
	{\stepcounter{er}
		\todo[linecolor=red,backgroundcolor=red!1,bordercolor=red,noline]{\color{black}ER(\theer): #2}}

	\newcommand\todoin[1]{\stepcounter{er}
		\todo[inline, caption={}, linecolor=red,backgroundcolor=red!1, bordercolor=red]%
		{\begin{minipage}{\textwidth-4pt}\color{black}ER(\theer): #1\end{minipage}}}

	\newcommand{\inlineER}[1]
	{\stepcounter{er}
		\todo[linecolor=red,backgroundcolor=red!1,bordercolor=red,inline]{
			%                   \begin{minipage}{\textwidth-4pt} \color{black}AL(\theal):#1 \end{minipage}}}
			\color{black}ER(\theer): #1}}

	\newcommand{\draftfootnote}[1]{\footnote{\color{gray}{\hrule \ \\ DRAFT FOOTNOTE: #1}}}

	\else
		\ifitsSPL   \onecolumn \textwidth=6.7in\oddsidemargin=0in\evensidemargin=0in%
		
	  \fi
		 
		\newcommand\todoin[1]{}

		\newcommand{\marginER}[2][0]{}
		\newcommand{\inlineER}[1]{}
		
		\newcommand{\rmk}[2]{} 
		\newcommand{\draftfootnote}[1]{}\def\gray#1{}

	% I still use refcheck in two-cols mode. Comment this line not to.
	%    \usepackage{refcheck}  \textwidth 44pc \columnsep 2pc  \addtolength{\oddsidemargin}{-0in}
\fi

%-----all these are only for the draft version -----^
%% %---- Drafts ----^

%--------------
\title{{ 
    \bf Simultaneous Topology Estimation and Synchronization of Dynamical Networks with Time-varying Topology
      }}
\author{Nana Wang, Esteban Restrepo, and Dimos V. Dimarogonas
  	% <-this % stops a space
  	\thanks{This work is supported by the Swedish Research Council (VR), the	Knut and Alice Wallenberg Foundation, and the WASP-DDLS program.
  		N. Wang and D. V. Dimarogonas are with the Division of Decision and Control Systems, KTH Royal Institute of Technology, SE-100 44 Stockholm, Sweden, email: \texttt{\{nanaw,dimos\}@kth.se}. E. Restrepo is with CNRS-IRISA, Inria Rennes, France, email: \texttt{esteban.restrepo@inria.fr}.
        }
    }
%%-----------------
\pgfplotsset{compat=1.18}
% \pgfplotsset{compat=1.18} 
\begin{document}

\maketitle

\begin{abstract}
	\textsf{\fontencoding{OT1}\fontsize{8.5}{10pt}\selectfont{
	We propose an adaptive control strategy for the simultaneous estimation of topology and synchronization in complex dynamical networks with unknown, time-varying topology. Our approach transforms the problem of time-varying topology estimation into a problem of estimating the time-varying weights of a complete graph, utilizing an edge-agreement framework. We introduce two auxiliary networks: one that satisfies the persistent excitation condition to facilitate topology estimation, while the other, a uniform-$\delta$ persistently exciting network, ensures the boundedness of both weight estimation and synchronization errors, assuming bounded time-varying weights and their derivatives. A relevant numerical example shows the efficiency of our methods}.      %
}
\end{abstract}

\section{Introduction}

Dynamical networks, exemplified by a collection of components through a communication network, are increasingly prevalent in various fields, including robotics, autonomous vehicles, distributed computing \cite{mesbahi_graph_2010} and biological systems \cite{moon2015general,luppi2021combining, shilts2022physical}.  The structure of these networks, outlining the interaction patterns among the components, is crucial to shaping the overall behaviour of the networks. However, in many practical scenarios, the topology structure of the network may not be known a priori or be subject to changes, posing a substantial challenge to understanding the fundamental principles for dynamical networks and further control.  
 
There have been many works on addressing network estimation problems, including optimization-based methods, knock-out methods \cite{nabi2012network}, and adaptive control-based methods \cite{zhou2007topology, zhu2021new}, among others, as highlighted in \cite{timme2014revealing}.  Static topology estimation problems are addressed by constructing a synchronized network or by identifying the network by knocking out nodes in \cite{zhou2007topology, nabi2012network, zhu2021new}. As for time-varying topology estimation,  machine learning methods have been applied to estimate network topology, as discussed in \cite{kolar2010estimating}, based on the assumption of either smooth parameter changes or piece-wise constant variations. The unknown switching topology is estimated through adaptive synchronization, specifically under the premise of piece-wise constant changes in switching topology \cite{li2023identifying}. However, these works, including those previously mentioned, primarily focus on the problem of topology estimation, overlooking the application of this topological information in further analysis or control of the network.

When the topology is static and unknown, a combination scheme between topology estimation and control tasks is to identify the topology first, and then use the identified topology for control tasks. After the topology is identified, the network can be controlled for complex tasks by coordination. A combined scheme of topology estimation and control was proposed by switching reference signals in \cite{wang2023finite}. A method in \cite{restrepo2023simultaneous} that realizes topology estimation and synchronization simultaneously was presented by tracking an auxiliary system which synchronizes after identifying topology. However, these methods fail when the topology is time-varying due to their assumptions of static topology.

 This paper proposes an adaptive-control-based method to address the simultaneous topology estimation and synchronization problem for dynamical networks with time-varying topology. The proposed methods guarantee the boundedness of weight estimation and synchronization errors assuming bounded weights and bounded weight derivatives. A scheme of combining the topology estimation and synchronization under time-varying topology is proposed, by estimating the time-varying topology and employing the estimated topology into the control input to synchronize the network.

The structure of the remainder of this paper is as follows: we formulate the 
problem in
Section \ref{sec:problem}.   Section \ref{sec:id} introduces a control scheme and adaptive parameter updating laws for pure topology estimation.  In Section \ref{sec:synch+id1}, we present the solution to the topology estimation and synchronization problem.
Section \ref{sec:simulations} verifies the proposed scheme's effectiveness with a numerical example. Finally, Section \ref{sec:conc} concludes the paper.

\section{Preliminaries}\label{sec:problem}

\subsection{Notations}

 $\mathbb{B}(\Delta)\subset\mathbb{R}^n$ denotes a closed ball of radius $\Delta$ centered at the origin, i.e. $\mathbb{B}(\Delta):=\{x\in\mathbb{R}^n\;:\;|x|\leq \Delta\}$. Denote $\|\cdot\|$  the Euclidean norm of vectors and the induced $L_2$ norm of matrices. The pseudoinverse of a matrix $X$  is denoted as $X^+$.  Denote $|\cdot|$  the absolute value of real numbers.
Denote $\mathcal{G}=(\mathcal{V},\mathcal{E},W)$ a directed weighted graph, where $\mathcal{V}=\{1,2,\dots,N\}$ is a node-set and   $\mathcal{E}\subseteq \mathcal{V}^2$ is an edge set with $M$ edges, characterizing the information exchange between agents.   A directed edge $e_{k}:=(i,j) \in \mathcal{E}$, indicates that agent $j$ has access to information from node $i$, and a positive diagonal matrix $W\in\mathbb{R}^{M\times M}$, whose diagonal $w_k$ entries represent the weights of the edges. We denote time-varying topology as  $\mathcal{G} (t)=(\mathcal{V},\mathcal{E}(t),W(t))$, where the edge set $\mathcal{E}(t)$ and the weight $ w_k(t)$ are time-varying.

\subsection{Model and problem formulation}
We consider a multi-agent system where the agents interact over an \emph{unknown} time-varying topology described by a \emph{directed} graph $\mathcal{G}(t)=(\mathcal V,\mathcal{E}(t), W(t))$, which is assumed to be connected. Without loss of generality, each agent's dynamics is described as follows
\begin{equation}\label{108}
\dot x_i = f_i(x_i) -c\sum_{j=1}^N w_{ij}(t)(x_i-x_j) + u_i \quad i\in\mathcal{V},
\end{equation}
where $x_i\in\mathbb{R}$ is the state of agent $i$; $f_i:\mathbb{R}\to\mathbb{R}$ is a smooth function, denoting its internal dynamics; $w_{ij}(t):\mathbb{R}_{\geq 0} \to\mathbb{R}$ denotes the \emph{unknown} weight function of the interconnection between agents $i$ and $j$; $c$ is a positive constant, denoting the strength of connectivity; $\forall t>0$, $w_{ij}(t)=0$ if the edge $e_{k= (i,j)}\notin\mathcal{E}$ and $w_{ij}(t)\neq0$ if the edge $e_{k= (i,j)}\in\mathcal{E}$. Hence, the edge set $\mathcal{E}(t)$ is time-varying depending on the values of $w_{ij}(t)$. The objective of the multi-agent system \eqref{108} is to achieve consensus among the agents with external control input under an unknown time-varying topology $\mathcal{G}(t)$. The consensus problem considered here can also be extended to formation control or other cooperation tasks.
For each agent's internal dynamics, we assume the following.
\begin{assumption}\label{n182}
	For each agent $i$, there exists a positive constant $L_i$ such that
	\begin{equation}
		\|f_i(x)-f_i(y)\|\leq L_i\|x-y\|
	\end{equation}
	for all $x$, $y\in\mathbb{R}$, where $1\leq i\leq N$.
\end{assumption}

Let $E(t):\mathbb{R}_{\geq 0}\to \mathbb{R}^{N\times M}$ denote the (unknown) incidence matrix function of $\mathcal{G}(t)$ from \cite{mesbahi_graph_2010} and recall that M denotes the number of edges. $E_\odot(t):\mathbb{R}_{\geq 0}\to \mathbb{R}^{N\times M}$ denotes the (unknown) in-incidence matrix function of $\mathcal{G}$, defined as follows: $\left[E_\odot\right]_{ik}(t) := -1$ if $i$ is the terminal node of edge $e_k$ and $\left[E_\odot\right]_{ik}(t) := 0$ otherwise. 
Then, denoting $x:=\left[x_1\;\dots\;x_N\right]^\top$, $F(x):=\left[f_1(x_1)\;\dots\;f_N(x_N)\right]^\top$, and $u:=\left[u_1\;\dots\;u_N\right]^\top$,  \eqref{108} can be written as
\begin{equation}\label{90}
\dot x = F(x)-cE_\odot(t) W(t)E(t)^\top x+u.
\end{equation}

Since the edges of $\mathcal{G}(t)$ are time-varying, the dimension of the incidence matrix function $E(t)$ is not fixed. To represent the unknown time-varying graph, we resort to using a complete graph whose weight of edges is unknown but the number of edges is fixed. 
Denote the incidence matrix $\bar E$ and in-incidence matrix $\bar E_\odot$  of a \emph{complete} graph $\mathcal{K}(\mathcal{V},\mathcal{E}_c,\bar W(t))$, where $\mathcal{E}\subseteq\mathcal{E}_c$. Denote the cardinality of $\mathcal{E}_c$ as  $\bar M$ and $M=N(N-1)$. Let $\bar W(t):=\text{diag}\{\bar w_k(t)\}$ where $\bar w_k(t)\equiv w_k(t)$ if $\bar e_k\in\mathcal{E}$ and $\bar w_k=0$ if $\bar e_k\in\mathcal{E}_c\backslash\mathcal{E}$.  This representation transforms searching for the unknown graph into estimating the weights of edges of the complete graph $\mathcal{K}$. The weight $\bar w_k(t)$ is non-zero if the edge $\bar e_k$ of the complete graph exists in the graph $\mathcal{G}$ to be identified.
Rewrite \eqref{90} as
\begin{equation}\label{99}
\dot x = F(x)-c\bar E_\odot\bar W(t)\bar E^\top x+u.
\end{equation}

\begin{assumption}\label{assumption_weight}
For any $0<k\leq N(N-1)$, there exist upper bounds $w_d$ and $w_d'$ for $\bar {w}_{ij}$ and $\dot {\bar w}_{ij}(t)$ such that
	\begin{equation}
      |\bar w_{k}(t)| \leq w_d,\  \      |\dot {\bar w}_{k}(t)| \leq w_d'
			 \; \; \forall t \ge 0.\end{equation}
\end{assumption}

\begin{remark}
Instead of considering a switching topology, we consider continuous time-varying changes  in the weight of edges here, exploiting the potential robustness of our design in the time-varying topology case.  This assumption contains the cases of adding new edges or removing the old ones by changing the weight of edges in a bounded way. For example, in the human immune cell activation process in response to a pathogen, the concentration level of cytokines, which facilitate communication between immune cells, is smoothly time-varying \cite{talaei2021mathematical}.   This boundedness assumption also means that $w_{ij}$ and its derivative $\dot w_{ij}$ are bounded.
\end{remark}

Using the edge-agreement representation for networked systems with a connected graph enables us to obtain an equivalent reduced system. Defining the edge variable $z:=\bar E^\top x$, rewrite \eqref{99} as
\begin{equation}\label{134}
\dot z = \bar E^\top F(x)-c\bar E^\top \bar E_\odot \bar W(t) z + \bar E^\top u.
\end{equation}
Using suitable labelling of edges, we can partition  the incidence matrix of the complete graph $\mathcal{K}$ as
\begin{equation}
\label{n196} \bar E = \left[\;\bar E_\mathcal{T}\quad \bar E_\mathcal{C}\;\right]
\end{equation}
where  $\bar E_\mathcal{T}\in\mathbb{R}^{N\times(N-1)}$ is the incidence matrix of an spanning tree $\mathcal G_\mathcal{T}\subset \mathcal K$ and $\bar E_\mathcal{C}\in\mathbb{R}^{N\times(\bar M-N+1)}$ denotes the incidence matrix of the remaining edges from \cite{mesbahi_graph_2010}. 
Similarly, partition the edge state as $z=\left[z_\mathcal{T}^\top\;\; z_\mathcal{C}^\top\right]^\top$, where $z_\mathcal{T}\in\mathbb{R}^{(N-1)}$ are the states of the edges of the spanning tree $\mathcal{G}_\mathcal{T}$ and $z_\mathcal{C}\in\mathbb{R}^{\bar M-N+1}$ denote the states of the remaining edges.
Moreover, define 
\begin{equation}\label{n205}
R:=\left[\;I_{N-1}\quad T\;\right],\quad T:=\left(\bar E_\mathcal{T}^\top \bar E_\mathcal{T}\right)^{-1}\bar E_\mathcal{T}^\top \bar E_\mathcal{C},
\end{equation}
with $I_{N-1}$ denoting the $N-1$ identity matrix. Based on \eqref{n196} and \eqref{n205}, we have
$\bar E = \bar E_\mathcal{T} R$ and
$z=R^\top z_\mathcal{T}$.
Then, we obtain  a reduced-order model  of \eqref{134} as 
\begin{equation}\label{n231}
\dot z_\mathcal{T} = \bar E_\mathcal{T}^\top F(x)-c\bar E_\mathcal{T}^\top \bar E_\odot \bar W(t) R^\top z_\mathcal{T} + \bar E_\mathcal{T}^\top u.
\end{equation}

 The topology estimation problem in \eqref{108} is transformed into estimating the time-varying diagonal entries of the matrix function $\bar{W}(t)$ in \eqref{n231}. Meanwhile, the synchronization problem for \eqref{108} is transformed into the stabilization problem of the origin for the reduced-order system \eqref{n231}.

\section{Topology estimation under bounded time-varying weights}\label{sec:id}

In this section, we introduce the external input $ u(t)$ to estimate the unknown graph topology $\mathcal{G}(t)$ for the dynamical systems \eqref{n231}.  A refined control design to our previous work addressing static topology estimation \cite{restrepo2023simultaneous} will be used.

\subsection{Control design and weight estimation laws}
Denote $\bar w(t):=\left[\bar w_1(t) \;\cdots\; \bar w_{\bar M}(t)\right]^\top\in\mathbb{R}^{\bar M}$ as the vector of unknown weights, $\hat w(t):=\left[\hat w_1(t) \;\cdots\; \hat w_{\bar M}(t)\right]^\top\in\mathbb{R}^{\bar M}$ as its estimate, and $\hat W(t):=\text{diag}\{\hat w(t)\}$.

Set the updating law 
\begin{equation}\label{170}
\dot{\hat w} =  -c\hat Z(t)\bar E_\odot^\top \bar E_{\mathcal{T}}  \tilde{z}_{\mathcal{T}},
\end{equation}
where $\tilde{z}_\mathcal{T}:=z_\mathcal{T}-\hat{z}_\mathcal{T}=\bar E_\mathcal{T}^\top \tilde x(t)$, $\tilde x(t):=x(t)-\hat{x}(t)$, $\hat{z}(t):=\bar E^\top \hat x(t)$, $\hat Z(t):=\text{diag}\{\hat z(t)\}$, and $\hat x(t)$ is an auxiliary variable to be designed later.

Select the control input
\begin{equation}\label{166}
u = -c_1 (x-\hat x(t)) + \dot{\hat{x}}(t) + c\bar E_\odot \hat W(t) \hat z(t) - F(\hat x(t))
\end{equation}
where $c_1$ is a positive constant.

\subsection{Time-varying topology estimation}
In this part, we analyze the effect of the time-varying weights on the topology estimation and show that using our design, the weight estimation errors remain bounded.

 Define $\tilde{w}(t):=\bar w (t)- \hat w(t)$. Utilizing  \eqref{n231}, \eqref{170} and \eqref{166}, we derive the closed-loop system as 
\begin{equation}\label{184}\begin{aligned}[b]
\left[\begin{matrix}
\dot{\tilde z}_\mathcal{T}\\\dot{\tilde w}
\end{matrix}\right]\!\!=\!\!&\left[\begin{matrix}
-c_1I-c\bar L_e & -c\bar{E}_\mathcal{T}^\top\bar{E}_\odot\hat{Z}(t)\\c\hat{Z}(t)\bar{E}_\odot^\top\bar{E}_\mathcal{T}&0\!\! 
\end{matrix}\right]\!\!\left[\begin{matrix} \tilde z_\mathcal{T}\\\tilde w\end{matrix}\right]\\
&+\left[\begin{matrix}\bar E_\mathcal{T}^\top\tilde F(x,\hat x)\\ \dot{\bar w}\end{matrix}\right],
\end{aligned}\end{equation}
where  $\bar L_e := \bar{E}_\mathcal{T}^\top\bar{E}_\odot\bar W R^\top$ and $\tilde F(x,\hat x):=F(x)-F(\hat x(t))$.

\begin{proposition}\label{prop1}
	Assume that the signal $\hat{Z}(t)$ is bounded, globally Lipschitz and satisfies that for any unit vector $v \in \mathbb{R}^{\bar M}$
 \begin{align}
 \label{n2011}
     \int_{t}^{t+T} \!\!\! & \| \hat Z(\tau) v\|^2 d\tau >\mu, \forall t\geq0. 
 \end{align}
 where $T, \mu>0$.
 With Assumptions~\ref{n182} and \ref{assumption_weight},  the edge weight estimation errors  $\tilde w(t)$ of the multi-agent system \eqref{108}  are globally ultimately bounded, and all the closed-loop signals are bounded, after applying update law \eqref{170} and the control input \eqref{166}. 
\end{proposition}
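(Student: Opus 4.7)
The strategy is a Lyapunov argument that exploits the skew-symmetric coupling between $\tilde z_\mathcal{T}$ and $\tilde w$ in \eqref{184}, combined with a classical persistency-of-excitation result for adaptive linear time-varying systems. I would proceed in three stages: cross-term cancellation, a cascade-boundedness argument involving $\tilde x$, and a PE-based ultimate-boundedness conclusion.

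First, take $V=\tfrac{1}{2}(\|\tilde z_\mathcal{T}\|^2+\|\tilde w\|^2)$. Because $\hat Z(t)$ is diagonal (hence symmetric) and the two off-diagonal blocks in \eqref{184} are $-c\bar E_\mathcal{T}^\top\bar E_\odot\hat Z$ and $c\hat Z\bar E_\odot^\top\bar E_\mathcal{T}$, the corresponding cross terms in $\dot V$ are transposes of each other with opposite signs and so cancel, leaving
\[
\dot V \;=\; -c_1\|\tilde z_\mathcal{T}\|^2 \;-\; c\tilde z_\mathcal{T}^\top\bar L_e\tilde z_\mathcal{T} \;+\; \tilde z_\mathcal{T}^\top\bar E_\mathcal{T}^\top\tilde F(x,\hat x) \;+\; \tilde w^\top\dot{\bar w}(t).
\]
Assumption~\ref{assumption_weight} bounds $\|\bar L_e\|\le w_d\|\bar E_\mathcal{T}\|\|\bar E_\odot\|\|R\|$ and $\|\dot{\bar w}\|\le\sqrt{\bar M}\,w_d'$, and Assumption~\ref{n182} gives $\|\tilde F\|\le L\|\tilde x\|$ with $L:=\max_i L_i$. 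Applying Young's inequality produces an estimate of the form $\dot V \le -\alpha_1\|\tilde z_\mathcal{T}\|^2 + \alpha_2\|\tilde x\|^2 + \alpha_3\|\tilde w\| + \alpha_4$, valid whenever $c_1$ is taken large enough to dominate $c\|\bar L_e\|$.

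The $\alpha_2\|\tilde x\|^2$ term is closed by analyzing the $\tilde x$-dynamics directly: from \eqref{108} and \eqref{166} one derives $\dot{\tilde x}=-c_1\tilde x+\tilde F-c\bar E_\odot\bar W R^\top\tilde z_\mathcal{T}-c\bar E_\odot\hat Z\tilde w$, and with $V_x=\tfrac{1}{2}\|\tilde x\|^2$, the Lipschitz bound $\|\tilde F\|\le L\|\tilde x\|$, and the boundedness of $\hat Z$ (by hypothesis) and $\bar W$ (Assumption~\ref{assumption_weight}), a second Young argument yields $\dot V_x\le -(c_1-L-\epsilon)\|\tilde x\|^2+\beta_1\|\tilde z_\mathcal{T}\|^2+\beta_2\|\tilde w\|^2$. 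A composite function $V_{\mathrm{tot}}=\gamma V_x+V$ with $\gamma$ and $c_1$ sufficiently large (in particular $c_1>L$) then certifies uniform boundedness of $(\tilde x,\tilde z_\mathcal{T},\tilde w)$, which together with the standing boundedness of $\hat x$ and $\hat Z$ yields boundedness of all closed-loop signals, including $u$.

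Finally, ultimate boundedness of $\tilde w$ follows from the PE condition \eqref{n2011}. Viewing \eqref{184} as a perturbed linear time-varying system $\dot\eta=A(t)\eta+d(t)$, with $\eta=(\tilde z_\mathcal{T},\tilde w)$ and $d$ collecting the (now bounded) terms $\bar E_\mathcal{T}^\top\tilde F$ and $\dot{\bar w}$, the nominal system $\dot\eta=A(t)\eta$ has the canonical skew-coupled adaptive-observer structure; the PE of $\hat Z(t)$ together with its global Lipschitzness implies, via classical results (Morgan--Narendra, Anderson), that the nominal system is uniformly globally exponentially stable, so bounded $d$ yields global ultimate boundedness of $\eta$ and in particular of $\tilde w$. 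The main obstacle is precisely this last step: the ``nominal'' diagonal block $-(c_1 I+c\bar L_e)$ is itself time-varying through $\bar W(t)$ and, being built from a directed incidence matrix, is not guaranteed to be symmetric-negative-definite, so one needs a robust PE-UGES lemma that uses the derivative bound $|\dot{\bar w}|\le w_d'$ of Assumption~\ref{assumption_weight} to control the time variation of $\bar L_e$.
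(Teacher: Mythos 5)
Your overall architecture (skew-symmetric cancellation in a quadratic function, plus a Morgan--Narendra-type PE argument on the nominal linear time-varying system) matches the paper's, but there are two genuine gaps in the execution. First, Stage 2 cannot deliver what you claim. With $V=\tfrac12(\|\tilde z_\mathcal{T}\|^2+\|\tilde w\|^2)$ the cross-term cancellation leaves \emph{no} negative-definite term in $\tilde w$ (the update law \eqref{170} has no leakage term), and your $\tilde x$-subsystem injects $+\gamma\beta_2\|\tilde w\|^2$ into $\dot V_{\mathrm{tot}}$ with nothing to dominate it. So the composite function $\gamma V_x+V$ does not certify boundedness of $\tilde w$, hence not of $\tilde x$ either (its bound depends on $\tilde w$), hence not of ``all closed-loop signals.'' This makes Stage 3 circular: you treat $d(t)=(\bar E_\mathcal{T}^\top\tilde F,\ \dot{\bar w})$ as a \emph{bounded} exogenous disturbance, but boundedness of $\bar E_\mathcal{T}^\top\tilde F$ requires boundedness of $\tilde x$, which you only get from Stage 2, which fails. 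The paper avoids this by never treating $\tilde F$ as a bounded disturbance: it bounds $\bar E_\mathcal{T}^\top\tilde F$ linearly in the state (its Eq.~\eqref{n285}), obtains a converse-Lyapunov function $V$ for the PE-exponentially-stable nominal system \eqref{n300} with $\dot V\le-\beta_4\|\xi\|^2$ and a gradient bound, and then absorbs the state-dependent perturbation into $-\beta_4\|\xi\|^2$ via Young's inequality, so that the only residual constant is $2w_d'^2/\delta^2$ coming from $\dot{\bar w}$. Boundedness of $\tilde w$ is thus a \emph{consequence} of the PE step, not a prerequisite for it.

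Second, you explicitly flag the PE-to-UGES step for the nominal system as ``the main obstacle'' (because $-c_1I-c\bar L_e$ is time-varying through $\bar W(t)$ and not symmetric) and then leave it unresolved. Since ultimate boundedness of $\tilde w$ rests entirely on this step, the proof is incomplete at its load-bearing point. The paper resolves it by noting that connectivity makes the eigenvalues of $\bar L_e$ have positive real parts, that $(-c_1I-c\bar L_e,\ \bar E_\mathcal{T}^\top\bar E_\odot)$ is controllable, and then invoking Morgan--Narendra (Theorem~5 of \cite{morgan1977stability}), which accommodates the time-varying, non-symmetric Hurwitz block; you should either cite such a result explicitly or supply the argument. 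Your instinct to track $\tilde x$ separately is not unreasonable (the paper's bound \eqref{n285} relating $\tilde F$ to edge errors rather than node errors is the delicate point it sidesteps), but as written the cascade cannot be closed without first securing the PE-based stability of the $(\tilde z_\mathcal{T},\tilde w)$ core.
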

\begin{proof}
	The closed-loop system \eqref{184} can be regarded as a perturbed form of 
	\begin{equation}\label{n300}
	\left[\begin{matrix}
	 \dot{\tilde z}_\mathcal{T}\\\dot{\tilde w}
	\end{matrix}\right]\!\!=\!\!\left[\begin{matrix}
	-c_1I-c\bar L_e &\!\! -c\bar{E}_\mathcal{T}^\top\bar{E}_\odot\hat{Z}(t)\\c\hat{Z}(t)\bar{E}_\odot^\top\bar{E}_\mathcal{T}&\!\! 0
\end{matrix}\right]\!\!\!\left[\begin{matrix} \tilde z_\mathcal{T}\\\tilde w\end{matrix}\right]\!\!.
	\end{equation}
	% Systems in a similar form as \eqref{n300} have been extensively studied in adaptive control \cite{narendra2012stable, anderson1977exponential, NARENDRA1987PE} and it can be summarised that if the cross diagonal term is persistently exciting, the origins of these systems are uniformly globally exponentially stable. 
 Since the graph is assumed to be connected, the eigenvalues of edge Laplacian $\bar L_e$  have positive real parts from \cite{mukherjee_robustness_2018}. Hence, $-c_1I-c\bar L_e$ is Hurwitz. And $\bar E_\mathcal{T}^\top \bar E_\odot$ has rank $N-1$ since $\bar E_\mathcal{T}^\top \bar E_\odot R^\top $ is full rank, as discussed in \cite{mukherjee_robustness_2018}. Then $(-c_1I-c\bar L_e \!\!\bar{E}_\mathcal{T}^\top\bar{E}_\odot)$ is controllable. 
 
  If $(-c_1I-c\bar L_e \bar{E}_\mathcal{T}^\top\bar{E}_\odot)$ is controllable, and $Z(\tau) $ is piecewise-continuous, bounded and satisfies \eqref{n2011},  then global uniform exponential stability of the origin for \eqref{n300} follows from Theorem 5 \cite{morgan1977stability} or Theorem 2.17 \cite{narendra2012stable, NARENDRA1987PE} and the linearity of \eqref{n300}.
	
	Denote $\xi:=\left[\tilde{z}_\mathcal{T}^\top\;\;\tilde w^\top\right]^\top\in\mathbb{R}^{\bar M + (N-1)}$. From the global exponential stability of \eqref{n300} and from converse Lyapunov theorems (Theorem 4.14 of \cite{Khalil2002nonlinear}), there exists a Lyapunov function $V(t,\xi):\mathbb{R}_{\geq 0} \times \mathbb{R}^{\bar M + (N-1)}\to\mathbb{R}_{\geq 0}$ such that
	\begin{IEEEeqnarray}{rCl}
 \label{210}
		\beta_1\|\xi\|^2\leq V(t,\xi)&\leq& \beta_2\|\xi\|^2\\
		\left\|\frac{\partial V}{\partial \xi}\right\|\leq \beta_3\|\xi\|,
	\end{IEEEeqnarray} 
	for some $\beta_1,\beta_2,\beta_3>0$, and its derivative along the trajectories of \eqref{n300} satisfies
	\begin{equation}
	\dot V(t,\xi) \leq -\beta_4\|\xi\|^2, \quad \beta_4>0.
	\end{equation} 

 In view of Assumption~\ref{n182}, we can further obtain
\begin{equation}\label{n285}
	\left\Vert\left[\bar E_\mathcal{T}^\top\left[F(x)-F(\hat x(t))\right]\right]_k\right\Vert\leq L_f\|\tilde z_k\|,
\end{equation}
 where $L_f:=\max\limits_{i\in\mathcal{V}}\{L_i\}$. Choose
\begin{equation}
\label{270}
    V_1(t,\xi)=0.5\|\tilde z_{\mathcal{T}}\|^2+0.5\|\tilde w\|^2.
\end{equation}
Along the trajectories of \eqref{184}, its derivative is 
\begin{equation}\begin{aligned}
\label{100}
    \dot V_1(t,\xi)=&-\tilde z_{\mathcal{T}}^\top(c_1I+c\bar L_e))\tilde z_{\mathcal{T}} -\tilde z_{\mathcal{T}}^\top\bar E_\mathcal{T}^\top\tilde F(x,\hat x) +\tilde w^\top \dot {\bar w}\\
    \leq &-(c_1+c\lambda_{\min}{\bar L_e}-L_f)\|\tilde z_{\mathcal{T}}\|^2+\|\tilde w\|\|\dot {\bar w}\|\\
    = &-c_1'\|\tilde z_{\mathcal{T}}\|^2+\|\tilde w\|\|\dot {\bar w}\|,
    \end{aligned}
\end{equation}
where $c_1':=c_1+c\lambda_{\min}\{\bar L_e\}-L_f$ and $\lambda_{\min}\{\bar L_e\}$ is the smallest eigenvalue of ${\bar L_e}$. The second inequality is obtained using \eqref{n285}.

Let $V'(t,\xi)=V(t,\xi)+V_1(t,\xi)$. In view of \eqref{n285}, \eqref{100} and Assumption 2, its derivative along the trajectories of \eqref{184} is 
\begin{equation}\label{n331}\begin{aligned}[b]
	\dot V'(t,\xi)\leq& -\beta_4\|\xi\|^2 + \frac{\partial V}{\partial \tilde z_{\mathcal{T} } }^\top \bar E_\mathcal{T}^\top \tilde{F}(x,\hat x)+\frac{\partial V}{\partial \tilde w}^\top  \dot {\bar w}\\
 &-c_1'\|\tilde z_{\mathcal{T}}\|^2+\|\tilde w\|\|\dot {\bar w}\|\\
 \leq&  \frac{\delta^2}{4}\left(\left\| \frac{\partial V}{\partial \tilde z_{\mathcal{T} } }\right\|^2+\left\|\frac{\partial V}{\partial \tilde w}\right\|^2\right)+\frac{L_f^2\|\tilde z_{\mathcal{T} }\|^2}{\delta^2}\\
 &-\beta_4\|\xi\|^2  +2\|\dot {\bar w}\|^2/{\delta^2}+\delta^2\|\tilde w\|^2/{4}-c_1'\|\tilde z_{\mathcal{T}}\|^2\\
	\leq& -\beta_4\|\xi\|^2 + \beta_3^2\delta^2\|\xi\|^2/{4}+\delta^2\|\xi\|^2/{4}+2{w'_d}^2/{\delta^2}\\
	\leq& -\beta_4'\|\xi\|^2+\beta_5,
	\end{aligned} \end{equation} 
 where $\beta_4':=\beta_4-\beta_3^2 \delta^2/4-\delta^2/4$, $\beta_5:=2{w_d'^2}/{\delta^2}$ and $\delta >0$, and we choose $c_1'$ that satisfies $c_1'-L_f^2/{\delta^2}>0$. The second inequality is obtained by applying Young's inequality. Then, by properly choosing $V$ and $\delta$ such that $\beta_4'>0$,  the origin of \eqref{184} is globally ultimately bounded from  \eqref{210}, \eqref{270} and \eqref{n331} by Theorem 4.18 in \cite{Khalil2002nonlinear}. The estimation error $\|\tilde w\|$ is globally ultimately bounded  and converges to $\Omega_{\tilde w}:=\{\tilde w:\|\tilde w\|\leq d_{\tilde w}\} $ with $d_{\tilde w}=\sqrt{{\beta_5}/{\beta'_4}} $. By \eqref{170}, \eqref{166}, $u(t)$ and $\hat w(t)$ are also bounded. Hence, the result follows.
\end{proof}

\begin{remark}
Proposition~\ref{prop1} shows that for dynamical systems \eqref{108} with time-varying topology, the control input \eqref{166} and weight estimation law \eqref{170} guarantee the boundedness of the weight estimation errors  $\|\tilde w\|$ provided that $\hat Z(t)$ is persistently exciting. Besides, if the weights are fixed, then the estimation errors $\|\tilde w\|$ will be bounded and further converge to zero.
\end{remark}

\section{Simultaneous topology estimation and synchronization for time-varying networks}\label{sec:synch+id1}

In this section, we explore simultaneous topology estimation and synchronization for \eqref{99} with the time-varying topology. We use the control input \eqref{166} in the following scheme. 

\subsection{Design of updating laws and auxiliary system}
Let $\hat{z}(t)$ be the state of an auxiliary dynamical system. Set the new updating law instead of \eqref{170} as
\begin{equation}\label{170_n}
\dot{\hat w} =  -c\hat Z(t)\bar E_\odot^\top \bar E_{\mathcal{T}} \tilde{z}_{\mathcal{T}}-\sigma_1 \hat w,
\end{equation}
where $\sigma_1$ is a positive constant. 

The updating law \eqref{170_n} adds  $\sigma_1 \hat w$ to guarantee the boundedness of $\|\tilde { w}\|$  under the bounded derivation of $\|w\|$.

Design the auxiliary dynamical system  as 
\begin{equation}\label{313}
	\dot{\hat z} = \bar E^\top F(\hat x) -c_2\hat{z} + \psi(t,\tilde z_\mathcal{T})
	\end{equation}
where $c_2>L_f$ is a positive constant and  the function $\psi(t,\tilde z_\mathcal{T}): \mathbb{R}_{\geq 0}\times\mathbb{R}^{N-1}\to\mathbb{R}^{\bar M}$ satisfies that
	\begin{equation}
	\max\left\{\|\psi(\cdot)\|,\left\|\frac{\partial \psi(\cdot)}{\partial t}\right\|,\left\|\frac{\partial \psi(\cdot)}{\partial \tilde z}\right\|\right\}\leq \kappa(\|\tilde z_\mathcal{T}\|), \forall t\geq 0,\label{258}
	\end{equation}
where $\kappa:\mathbb{R}_{\geq 0}\to\mathbb{R}_{\geq 0}$ is a continuous non-decreasing function. Define  $\Psi(t,\tilde z_\mathcal{T}) \in R^{\bar M\times \bar M}$ as a diagonal matrix function of $\psi(\cdot)$. Specifically, write $\Psi(t,x_1):=\text{diag}\{\psi(t,x_1)\}$. 
Define $\Psi'(t,\tilde z_\mathcal{T}):\mathbb{R}_{\geq 0}\times\mathbb{R}^{N-1}\to\mathbb{R}^{(N-1) \times \bar M}$. Let $\Psi'(t,\tilde z_\mathcal{T})=\bar{E}_\mathcal{T}^\top\bar{E}_\odot \Psi(t,\tilde z_\mathcal{T})$. $ \Psi'(t,\tilde z_\mathcal{T})$
 is uniform $\delta$-persistently exciting (u$\delta$-PE) with respect to $\tilde z_\mathcal{T}$ as per Definition 5 in \cite{panteley2001relaxed}.

\subsection{Stability analysis of the unperturbed systems}
Using \eqref{n231}, \eqref{166} and \eqref{170_n}, we obtain the new closed-loop system as
\begin{equation}\label{closesystem}
 \begin{aligned}
 \left[\begin{matrix}
 \dot{\tilde z}_\mathcal{T}\\\dot{\tilde w}
\end{matrix}\right]=&\left[\begin{matrix} \bar E_\mathcal{T}^\top\tilde F(x,\hat x) -(c_1I+c\bar L_e )\tilde z_\mathcal{T}-c\bar{E}_\mathcal{T}^\top\bar{E}_\odot\hat{Z}(t)  \tilde w \\  c\hat{Z}(t)\bar{E}_\odot^\top\bar{E}_\mathcal{T} \tilde z_\mathcal{T} \end{matrix}\right]\\
&+\left[\begin{matrix}0\\ \dot{\bar w} +\sigma_1 \hat w \end{matrix}\right].
\end{aligned}\end{equation}

Similar to the previous analysis,  the closed-loop system \eqref{closesystem} can be seen as the perturbed version of 
\begin{equation}\label{closesystem_un}
 \begin{aligned}
 \left[\begin{matrix}
 \dot{\tilde z}_\mathcal{T}\\\dot{\tilde w}
\end{matrix}\right]=&\left[\begin{matrix} \bar E_\mathcal{T}^\top\tilde F(x,\hat x)-(c_1I+c\bar L_e )\tilde z_\mathcal{T}-c\bar{E}_\mathcal{T}^\top\bar{E}_\odot\hat{Z}(t)  \tilde w \\ c\hat{Z}(t)\bar{E}_\odot^\top\bar{E}_\mathcal{T}RR^\top \tilde z_\mathcal{T} \end{matrix}\right].
\end{aligned}\end{equation}

Before studying the stability of  \eqref{closesystem}, we first analyze the stability of the unperturbed system \eqref{closesystem_un}.  Replacing $x_1$, $x_2$, $A(t,x_1)$, $B$, $\Phi$ and $\phi$ by $\tilde z_\mathcal{T}$, $\tilde w$, $ \bar E_\mathcal{T}^\top\tilde{F}(x,\hat x)-\left(c_1I+\bar E_\mathcal{T}^\top \bar E_\odot \bar W R^\top\right)\tilde{z}_\mathcal{T} $, $ \bar E_\mathcal{T}^\top \bar E_\odot$, $\hat Z$ and $\hat z$, respectively, 
 we can represent \eqref{closesystem_un} as 
	\begin{equation}\label{273}
	\left[\begin{matrix}
	\dot{x}_1\\\dot{x}_2
\end{matrix}\right]=\left[\begin{matrix}
	A(t,x_1) + B\Phi(t,x_1)^\top x_2\\-\Phi(t,x_1)B^\top x_1
	\end{matrix}\right]
	\end{equation}
	where $x^\top:=\left[x_1^\top\;\;x_2^\top\right]$,  $\Phi(t,x_1):\mathbb{R}_{\geq 0}\times\mathbb{R}^{N-1}\to\mathbb{R}^{\bar M\times \bar M}$  and  $\phi(t,x_1):\mathbb{R}_{\geq 0}\times\mathbb{R}^{N-1}\to\mathbb{R}^{\bar M}$ are piece-wise continuous in $t$ and continuous in $x_1$. Moreover, $\Phi(t,x_1)$ is diagonal with    $\Phi(t,x_1):=\text{diag}\{\phi(t,x_1)\}$.    
 Assume the following:
	\begin{assumption}\label{n422}
		The function $A$ is locally Lipschitz in $x$ uniformly in $t$. Moreover, there exists a continuous nondecreasing function $\rho_1:\mathbb{R}_{\geq 0}\to\mathbb{R}_{\geq 0}$ such that $\rho_1(0)=0$ and for all $(t,x_1)\in\mathbb{R}\times\mathbb{R}^{ N-1}$, $\|A(t,x_1)\|\leq \rho_1(\|x_1\|)$.
	\end{assumption}
	\begin{assumption}\label{n425}
		There exists a locally Lipschitz function $V_1:\mathbb{R}_{\geq 0}\times\mathbb{R}^{N+M-1}\to\mathbb{R}_{\geq 0}$, and $\alpha_1, \alpha_2, \alpha_3 >0$ such that
		\begin{equation}
			\alpha_1\|x\|^2\leq V_1(t,x)\leq \alpha_2\|x\|^2\label{n428}
		\end{equation}
		and its derivative along the trajectories of \eqref{273} satisfies
		\begin{equation}
			\dot V_1(t,x) \leq -\alpha_3\|x_1\|^2.\label{n429}
		\end{equation} 
		
	\end{assumption}
	Then we state the following lemma.
	\begin{lemma}\label{USPAS}
 Let Assumptions~\ref{n422} and \ref{n425} hold. Assume $B\Phi^\top (t,x_1)$ is u$\delta$-PE with respect to $x_1$ and   $\Phi(t,x_1)$  satisfies 
 \begin{equation}
	\max\left\{\|\Phi(\cdot)\|,\left\|\frac{\partial \Phi(\cdot)}{\partial t}\right\|,\left\|\frac{\partial \Phi(\cdot)}{\partial x_1}\right\|\right\}\leq\rho(\|x_1\|), \forall t\geq 0,\label{259}
	\end{equation}
where $\rho:\mathbb{R}_{\geq 0}\to\mathbb{R}_{\geq 0}$ is a continuous non-decreasing function.
 Then the origin of \eqref{273} is \emph{uniformly semiglobally asymptotically stable}. 
	\end{lemma}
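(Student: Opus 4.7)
The plan is to combine a Lyapunov/Barbalat argument based on $V_1$ with the u$\delta$-PE hypothesis on $B\Phi^\top$ to propagate convergence from $x_1$ to $x_2$, in the spirit of the nested Matrosov / relaxed--persistency-of-excitation framework of \cite{panteley2001relaxed}.

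First, I would use Assumption~\ref{n425} to establish uniform stability and forward completeness. From \eqref{n428}--\eqref{n429}, $V_1$ is nonincreasing along trajectories, so every solution with $\|x(t_0)\|\leq r$ remains in the compact set $K_r:=\{x:\|x\|\leq\sqrt{\alpha_2/\alpha_1}\,r\}$. This is the mechanism behind the \emph{semiglobal} qualifier: one fixes $r$ at the outset, and on $K_r$ Assumption~\ref{n422} and \eqref{259} reduce to uniform bounds, so the right-hand side of \eqref{273} is uniformly bounded and Lipschitz. Integrating \eqref{n429} then yields $\int_{t_0}^\infty\|x_1(\tau)\|^2\,d\tau<\infty$, and since $\dot x_1$ is bounded on $K_r$ the map $t\mapsto\|x_1(t)\|^2$ is uniformly continuous; Barbalat's lemma gives $x_1(t)\to 0$.

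The main step --- and the place where I expect the real work to lie --- is propagating this attractivity to $x_2$. My plan is to argue by contradiction: if $x_2(t)\not\to 0$, there exist $\varepsilon>0$ and $t_k\uparrow\infty$ with $\|x_2(t_k)\|\geq\varepsilon$, and because $\dot x_2$ is bounded on $K_r$ one obtains $\|x_2(t)\|\geq\varepsilon/2$ on an interval $[t_k,t_k+\sigma]$ of fixed length. The u$\delta$-PE of $B\Phi^\top(\cdot,x_1)$, together with the Lipschitz-in-$x_1$ bound in \eqref{259}, furnishes, via a standard ``freeze-$x_1$'' argument, a uniform lower bound of the form $\int_{t_k}^{t_k+T}\|B\Phi^\top(\tau,x_1(\tau))\,x_2(\tau)\|^2\,d\tau\geq c\,\varepsilon^2$ for all sufficiently large $k$. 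On the other hand, using the $x_1$-equation, $B\Phi^\top x_2=\dot x_1-A(t,x_1)$, together with $x_1(t)\to 0$, $\rho_1(0)=0$, and integrability of $\dot x_1$ on sliding windows, the same integral tends to zero. The contradiction forces $x_2(t)\to 0$, and combining with the uniform stability from the first step yields uniform asymptotic stability on the ball of radius $r$; since $r$ is arbitrary, uniform semiglobal asymptotic stability of the origin of \eqref{273} follows.

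The delicate point will be precisely the ``freeze-$x_1$'' step: u$\delta$-PE is a condition at each fixed parameter value, while along the trajectory $x_1$ varies. The partial-derivative bounds on $\Phi$ in \eqref{259} are tailor-made for this, since they let us approximate $\Phi(\tau,x_1(\tau))$ by $\Phi(\tau,x_1(t_k))$ with error controlled by $\sup_{\tau\in[t_k,t_k+T]}\|x_1(\tau)-x_1(t_k)\|$, a quantity that vanishes as $k\to\infty$ thanks to $x_1\to 0$ and $\dot x_1$ bounded on $K_r$. This is the one place where the bound in \eqref{259} on $\partial\Phi/\partial x_1$ --- and not merely on $\Phi$ itself --- is essential, and it clarifies why the lemma needs that full set of regularity hypotheses rather than just boundedness of $\Phi$.
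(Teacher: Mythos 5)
Your first step (uniform stability and boundedness from $V_1$, integrability of $\|x_1\|^2$, Barbalat for $x_1\to 0$) is fine, but the contradiction argument you use to propagate convergence to $x_2$ has a genuine gap, and it sits exactly at the point you flag as ``delicate''. The hypothesis is that $B\Phi^\top(t,x_1)$ is u$\delta$-PE \emph{with respect to} $x_1$ (Definition~\ref{def:dPE}): the constants $T,\mu$ are guaranteed only for each fixed $x\in\mathcal{D}_1$, i.e.\ for $x_1\neq 0$, and they may --- and in the intended application do --- degenerate as $x_1\to 0$. Your step 2 establishes precisely that $x_1(t)\to 0$, so the base points $x_1(t_k)$ at which you invoke the PE lower bound approach the set $\{x_1=0\}$, where $\delta$-PE provides no excitation at all. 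Concretely, the $\Phi$ used in the paper is $\Psi(t,\tilde z_\mathcal{T})=\text{diag}\{\psi(t,\tilde z_\mathcal{T})\}$ with $\psi$ vanishing at $\tilde z_\mathcal{T}=0$ (cf.\ the $\tanh$ construction in the simulation section), so $\Phi(\tau,x_1(t_k))\to 0$ and the ``frozen'' integral $\int_{t_k}^{t_k+T}\|B\Phi^\top(\tau,x_1(t_k))x_2(\tau)\|^2d\tau$ itself tends to zero: no uniform lower bound $c\,\varepsilon^2$ exists, and the contradiction never materializes. Your error control via $\partial\Phi/\partial x_1$ handles the variation of $x_1$ \emph{within} each window, not the degeneration of $\mu$ at the base point. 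The other half of the contradiction is also unsupported: deducing that the integral tends to zero from $B\Phi^\top x_2=\dot x_1-A(t,x_1)$ requires $\int_{t_k}^{t_k+T}\|\dot x_1\|^2\to 0$, which does not follow from $x_1\to 0$ and is circular, since $\dot x_1$ contains the very term being estimated. Finally, Barbalat yields attractivity trajectory by trajectory, whereas the statement asserts \emph{uniform} semiglobal asymptotic stability; the uniformity in $t_0$ and in initial conditions would need a separate argument.

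The paper takes a different, strict-Lyapunov route that avoids Barbalat entirely: it augments $V_1$ with the cross term $V_2=-x_1^\top B\Phi(t,x_1)^\top x_2$ and the excitation functional $V_3=-\int_t^\infty e^{(t-\tau)}\|B\Phi(\tau,x_1)^\top x_2\|^2d\tau$ as in \eqref{303}, uses the PE hypothesis to make $V_3$ contribute a negative-definite term in $x_2$ (see \eqref{v3b}), and after dominating the cross terms for small $\varepsilon$ obtains $\dot V\leq-\alpha\|x_1\|^2-\beta\|x_2\|^2$ on $\mathbb{B}(\Delta)$, from which the conclusion follows directly with the required uniformity. To repair your argument you would either need an excitation bound that is uniform over the whole ball (not merely $\delta$-PE away from $\{x_1=0\}$), or you should restructure the $x_2$-step around such a strict Lyapunov function.
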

	
 	\begin{proof}
Consider a Lyapunov function candidate as 
\begin{equation}\label{303}
    \begin{aligned}
        V(t,x)&:=V_1(t,x)+\varepsilon V_4(t,x)\\
        V_4(t,x)&:=V_2(t,x)+V_3(t,x)\\
	V_2(t,x)&:=-x_1^\top B\Phi(t,x_1)^\top x_2\\
	V_3(t,x)&:=-\int_{t}^{\infty}e^{(t-\tau)}\left\|B\Phi(\tau,x_1)^\top x_2\right\|^2d\tau,
	     \end{aligned}\end{equation}
where $V_1(t,x)$ is given in Assumption~\ref{n425} and $\varepsilon>0$.  
 Using the u$\delta$-PE of $B\Phi^\top$, for all $(t,x)\in\mathbb{R}\times\mathbb{B}(\Delta)$, one has 
 \begin{equation} \label{v3b}
 \begin{aligned}
     V_3(t,x)=& -\int_{t}^{\infty}e^{(t-\tau)}  x_2^\top\Phi(\tau,x_1)B^\top B\Phi(\tau,x_1)^\top x_2  d\tau\\
     \leq& -\int_{t}^{t+T}e^{(t-\tau)} x_2^\top\Phi(\tau,x_1)B^\top B\Phi(\tau,x_1)^\top x_2 d\tau\\
   \leq  &(e^{-T}-1) \mu \|x_2\|^2,
 \end{aligned}    
 \end{equation} 
where $b':=(e^{-T}-1) \mu$, $\mu$ and $T$ are defined from Definition 5 in \cite{panteley2001relaxed}. 
In view of \eqref{259} and \eqref{v3b}, $ V_4(t,x)$ in \eqref{303} satisfies, for all $(t,x)\in\mathbb{R}\times\mathbb{B}(\Delta)$, 
	\begin{equation}\label{326}
	V_4(t,x)\leq b\|x_1\|\rho(\|x_1\|)\|x_2\|- b'\|x_2\|^2,
	\end{equation} 
	where $b:=\|B\|$.  Define $b_\rho:=b\rho(\Delta)$. In view of \eqref{326}, $\varepsilon V_4(t,x)$ satisfies on $\mathbb{R}\times\mathbb{B}(\Delta)$  
	\begin{equation}\label{333}\begin{aligned}
	-\varepsilon\rho(\Delta)\|x_2\|^2&-\varepsilon b_\rho\|x_1\|\|x_2\|\leq \varepsilon V_4(t,x)\leq \varepsilon b_\rho\|x_1\|\|x_2\|\\&-\varepsilon  b'\|x_2\|^2.
	\end{aligned}\end{equation}
	So, from \eqref{n428} and \eqref{333}, for any $\Delta>0$ and for a sufficiently small $\varepsilon$, there exist $\underline{\alpha}_{\Delta}>0$ and $\overline{\alpha}_{\Delta}>0$ such that for all $(t,x)\in\mathbb{R}\times\mathbb{B}(\Delta)$
	\begin{equation}\label{341}
	\underline{\alpha}_{\Delta}\|x\|^2\leq V(t,x) \leq \overline{\alpha}_{\Delta}\|x\|^2.
	\end{equation}
 
	We proceed to obtain the derivative of $V_4(t,x)$ along the trajectories of the system \eqref{273}. First, we have
	 \begin{equation}\label{v2}\begin{aligned}
	\dot V_2(t,x)=& \|\Phi(t,x_1)B^\top x_1\|^2-x_2^\top \Phi(t,x_1)^\top B^\top A(t,x_1)\\
	&\; -\| B \Phi(t,x_1) x_2\|^2-x_2^\top \dot{\overbrace{\Phi(t,x_1)}}B^\top x_1,
	% =& \; V_2(t,x)-\left|B \phi(t,x_1) x_2\right|^2 + \left|\phi(t,x_1)B^\top x_1\right|^2\\
	% &-x_2^\top \phi(t,x_1)^\top B^\top (A(t,x_1)-x_1)\\& -x_2^\top \dot{\overbrace{\phi(t,x_1)}}B^\top x_1.
	\end{aligned}\end{equation} 
 where  $\dot{\overbrace{\Phi(t,x_1)}}:=\frac{\partial \Phi(t,x_1)}{\partial t}+\frac{\partial \Phi(t,x_1)}{\partial x_1}$.  
	Next, we have
	 \begin{IEEEeqnarray*}{rCl} \label{v3}
	\frac{\partial V_3}{\partial x_1}&=&-\int_{t}^{\infty}\!\!\!\!{2e^{(t-\tau)}x_2^\top\Phi(\tau,x_1)B^\top B\left[\frac{\partial \Phi(\tau,x_1)}{x_1}^\top x_2\right]} d\tau\\
	\frac{\partial V_3}{\partial x_2}&=&-\int_{t}^{\infty}{2e^{(t-\tau)}\Phi(\tau,x_1)B^\top B\Phi(\tau,x_1)^\top x_2} d\tau\\
	\frac{\partial V_3}{\partial t}&=&\!\left\|B\Phi(t,x_1)^\top \!x_2\right\|^2 \!\!-\!\! \int_{t}^{\infty}\!\!\!\frac{\partial}{\partial t}\left[e^{(t-\tau)}\!\!\left\|B\Phi(\tau,x_1)^\top \!x_2\right\|^2\right]\! d\tau.
	\end{IEEEeqnarray*} 

	From Assumption~\ref{n422}, \eqref{259}, \eqref{v2} and \eqref{v3}, we obtain an upper bound function for the derivative of $V_4(t,x)$. Define $$\bar\rho(r,s):=b_\rho\left[(2+2b^2_\rho)rs+(1+2b^2_\rho)\rho_1(r)s+b_\rho r^2+2b^2_\rho s^2\right].$$ Then, for $(t,x)\in\mathbb{R}\times\mathbb{B}(\Delta)$, 
	 \begin{equation}\label{318}
	\dot V_4(t,x)\leq \bar\rho(\|x_1\|,\|x_2\|) - b'\|x_2\|^2.
	\end{equation}

	Using \eqref{n429} and \eqref{318}, the derivative of $V(t,x)$ satisfies, for all $(t,x)\in\mathbb{R}\times\mathbb{B}(\Delta)$, 
	 \begin{equation*}\begin{aligned}[b]
	\dot V(t,x) \leq& -\alpha_3\|x_1\|^2 - \varepsilon (2b_\rho+2b^3_\rho)\|x_1\|\|x_2\| + 2\varepsilon b_\rho^3\|x_2\|^2\\&  + \varepsilon b_\rho^2\|x_1\|^2 +(b_\rho+2b^3_\rho)\rho_1(\|x_1\|)\|x_2\|- \varepsilon b'\|x_2\|^2.
	\end{aligned}\end{equation*} 
Note that $b'=(e^{-T}-1) \mu$.	Choosing $\mu$ and $T$ such that  $b' \geq b_\rho^2+2b_\rho^3+\beta'$ and $\beta'>0$ yields
	\begin{equation*}\begin{aligned}[b]
	\dot V(t,x) 	\leq& -\left(\alpha_3-\left(4+b_\rho^2+4b_\rho^4\right)\varepsilon \right)\|x_1\|^2- \varepsilon \beta'\|x_2\|^2 \\
  &+(1+4b_\rho^4)\varepsilon\rho^2_1(\|x_1\|). 
	\end{aligned}\end{equation*}
 	Selecting $\varepsilon$ sufficiently small such that $\alpha_3-\varepsilon\left(4+b_\rho^2-4b_\rho^4\right) -\varepsilon(1+4b_\rho^4) {\rho_1(|\Delta|)/|\Delta|}^2 >\alpha$,
 yields
  \begin{equation}\begin{aligned}[b] \label{353}
	\dot V(t,x)   \leq& -\alpha\|x_1\|^2- \beta\|x_2\|^2,
	\end{aligned}\end{equation}
 where $\beta=\varepsilon \beta'$.
	Therefore, by Theorem 4.9 in \cite{Khalil2002nonlinear}, for all $(t,x)\in\mathbb{R}\times\mathbb{B}(\Delta)$,  the origin of \eqref{273} is semi-globally uniformly asymptotically stable from \eqref{341} and \eqref{353}. 
	\end{proof}

\begin{remark}
Contrary to \cite{restrepo2023simultaneous} which studies the stability where the unknown parameters are defined in a certain set,  Lemma ~\ref{USPAS} analyzes the stability for \eqref{273} when the parameters are unknown and fixed. The result from  Lemma ~\ref{USPAS} yielding uniform global asymptotical stability is thus stronger than the case of uniform practical stability derived in \cite{restrepo2023simultaneous}.
\end{remark}

\subsection{Simultaneous topology estimation  and synchronization}
Considering the time-varying weights as the disturbance of system \eqref{273}, we analyze the robustness of system \eqref{closesystem} under time-varying topology in Proposition 2,  based on Lemma~\ref{USPAS}. 

\begin{proposition}\label{propUSPAS}
	 Let Assumptions~\ref{n182} and \ref{assumption_weight} hold.
Then, the origin of the closed-loop system \eqref{closesystem} with the update law \eqref{170_n} and control input \eqref{166},
	 is \emph{uniformly semi-globally stable} with $\hat z(t )$ given by the update law \eqref{313}. Its weight estimation errors $\tilde w$ are ultimately bounded, and converge to a set $\Omega_{\tilde w}$.  Furthermore, the edge states $z$ are also ultimately bounded, and converge to a set $\Omega_z$.
\end{proposition}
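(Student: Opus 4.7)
The plan is to cast the unperturbed system \eqref{closesystem_un} into the canonical form \eqref{273} of Lemma~\ref{USPAS}, apply that lemma to obtain uniform semi-global asymptotic stability of its origin, and then treat \eqref{closesystem} as a perturbed version of \eqref{closesystem_un} with additive disturbance $d(t):=\dot{\bar w}+\sigma_1\hat w$ entering the $\tilde w$-channel. Concretely, I would identify $x_1:=\tilde z_\mathcal{T}$, $x_2:=\tilde w$, $B:=-c\bar E_\mathcal{T}^\top\bar E_\odot$, $\Phi(t,x_1):=\hat Z(t)$, and $A(t,x_1):=\bar E_\mathcal{T}^\top\tilde F(x,\hat x)-(c_1 I+c\bar L_e)\tilde z_\mathcal{T}$. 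Assumption~\ref{n422} follows from the Lipschitz estimate \eqref{n285} with $\rho_1(r)=(L_f+c_1+c\|\bar L_e\|)r$; Assumption~\ref{n425} is satisfied by $V_1$ in \eqref{270}, whose derivative along \eqref{closesystem_un} reduces to $-(c_1+c\lambda_{\min}\{\bar L_e\}-L_f)\|\tilde z_\mathcal{T}\|^2$ after cancellation of the skew-symmetric cross-terms between $\tilde z_\mathcal{T}$ and $\tilde w$. The u$\delta$-PE hypothesis on $B\Phi^\top$ in $x_1$ and the regularity bound \eqref{259} are delivered by the construction of the auxiliary system \eqref{313} together with \eqref{258}; boundedness of $\hat z$ itself is separately verified using $c_2>L_f$ and the bound on $\psi$.

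With Lemma~\ref{USPAS} applied, there exists a Lyapunov function $V(t,\xi)$, $\xi:=[\tilde z_\mathcal{T}^\top,\tilde w^\top]^\top$, satisfying \eqref{341} and \eqref{353} on every ball $\mathbb{B}(\Delta)$. Using the decomposition $\sigma_1\hat w=\sigma_1\bar w-\sigma_1\tilde w$, the derivative of $V$ along the trajectories of \eqref{closesystem} becomes
\begin{equation*}
\dot V \leq -\alpha\|\tilde z_\mathcal{T}\|^2-\beta\|\tilde w\|^2 + \tfrac{\partial V}{\partial \tilde w}^\top(\dot{\bar w}+\sigma_1\bar w) - \sigma_1 \tfrac{\partial V}{\partial \tilde w}^\top \tilde w.
\end{equation*}
The first bracketed term is bounded by Assumption~\ref{assumption_weight}, and the second is handled by Young's inequality and the gradient estimate $\|\partial V/\partial \xi\|\leq\beta_3'\|\xi\|$ that follows from the quadratic bounds in \eqref{341} together with the smoothness of $V$. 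Absorbing the resulting cross-terms into a fraction of the negative-definite part gives
\begin{equation*}
\dot V(t,\xi) \leq -\alpha'\|\xi\|^2+c_d,
\end{equation*}
with $\alpha'>0$ and $c_d=\mathcal{O}(w_d^2+w_d'^2)$. Uniform ultimate boundedness then follows from Theorem~4.18 in \cite{Khalil2002nonlinear}, yielding an ultimate radius of order $\sqrt{c_d/(\alpha'\underline{\alpha}_\Delta)}$; this defines the set $\Omega_{\tilde w}$ and, since $z=R^\top z_\mathcal{T}=R^\top(\tilde z_\mathcal{T}+\hat z_\mathcal{T})$ with $\hat z$ bounded through \eqref{313}, also yields the set $\Omega_z$.

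The main obstacle is managing the semi-global, rather than global, character inherited from Lemma~\ref{USPAS}: the constants $\underline{\alpha}_\Delta$, $\overline{\alpha}_\Delta$, $\alpha$, $\beta$, and $b_\rho=b\rho(\Delta)$ all depend on the a priori radius $\Delta$, so $\varepsilon$, the u$\delta$-PE parameters $(\mu,T)$, and $\sigma_1$ must be tuned as functions of $\Delta$ in order to guarantee forward invariance of $\mathbb{B}(\Delta)$ for the \emph{perturbed} system \eqref{closesystem} and not merely for \eqref{closesystem_un}. A secondary subtlety is the classical tension of $\sigma$-modification: $\sigma_1$ must be large enough that $-\sigma_1\|\tilde w\|^2$ dominates the drift induced by $\dot{\bar w}$, yet small enough that the bias $\sigma_1 w_d$ appearing in $c_d$ does not inflate $\Omega_{\tilde w}$ unduly.
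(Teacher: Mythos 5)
Your overall architecture coincides with the paper's: verify Assumptions~\ref{n422} and \ref{n425} for the unperturbed system \eqref{closesystem_un}, invoke Lemma~\ref{USPAS}, and then absorb the perturbation $\dot{\bar w}+\sigma_1\hat w=\dot{\bar w}+\sigma_1\bar w-\sigma_1\tilde w$ via Young's inequality into the negative-definite right-hand side of \eqref{353}. The identifications of $x_1$, $x_2$, $A$, $B$, $\Phi$, the verification of the two assumptions, and the final ultimate-boundedness step all match the paper's proof, and your closing remarks on the $\Delta$-dependence of the constants and on the $\sigma$-modification trade-off are accurate.

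The one genuine gap is the sentence asserting that the u$\delta$-PE of $B\Phi^\top=B\hat Z(t)$ with respect to $\tilde z_\mathcal{T}$ is ``delivered by the construction of the auxiliary system \eqref{313} together with \eqref{258}.'' It is not: what is u$\delta$-PE by design is the forcing term $\psi(t,\tilde z_\mathcal{T})$ (through $\Psi'$), whereas $\hat z$ is the \emph{output} of the strictly proper filter \eqref{313}, and persistency of excitation is not generically preserved under filtering. The paper closes this step with the filtration property, Lemma~\ref{lemma1} in Appendix~\ref{apx:dPE}, whose hypothesis \eqref{n755} requires all closed-loop solutions to be bounded a priori. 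Consequently one must establish, \emph{before} Lemma~\ref{USPAS} can be applied, that $\xi=[\tilde z_\mathcal{T}^\top\;\tilde w^\top]^\top$ is ultimately bounded along the \emph{perturbed} system \eqref{closesystem} (this is where the $\sigma_1\hat w$ leakage term is indispensable, via the bound \eqref{231}) and that $\hat z$ is bounded (your $V_5$-type argument with $c_2>L_f$). You verify only the second of these and never invoke the filtration lemma, so the u$\delta$-PE hypothesis of Lemma~\ref{USPAS} is left unproven. A secondary, more cosmetic point: the gradient estimate $\left\|\partial V/\partial\xi\right\|\leq\beta_3'\|\xi\|$ does not follow from the quadratic sandwich \eqref{341} alone; it has to be read off the explicit form \eqref{303} of $V$ together with the bounds \eqref{259} on $\Phi$ and its partial derivatives on $\mathbb{B}(\Delta)$ --- which is in effect what the paper does when it computes $\Delta V'_1$, $\Delta V'_2$, $\Delta V'_3$ term by term in \eqref{n111} and \eqref{n211}.
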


 The sets $\Omega_{\tilde w}$ and $\Omega_z$ are defined in the proof that follows.

\begin{proof}
We first show that $\hat z(t)$ is u$\delta$-PE with respect to $\tilde z_\mathcal{T}$. Denote $\xi:=\left[\tilde{z}_\mathcal{T}^\top\;\;\tilde w^\top\right]^\top$. Define $V_1(t,\xi)$ as in \eqref{270}. Its derivative along \eqref{closesystem} is
\begin{equation}\label{231}\begin{aligned}
	\dot V_1(t,\xi)=&-\tilde z_{\mathcal{T}}^\top(c_1I+c\bar L_e))\tilde z_{\mathcal{T}} -\tilde z_{\mathcal{T}}^\top\bar E_\mathcal{T}^\top\tilde F(x,\hat x) +\tilde w^\top \dot {\bar w}\\&+ \sigma_1 \dot {\bar w}^\top \hat w\\
 \leq&-(c_1+c\lambda_{\min}{\bar L_e}))\|\tilde z_{\mathcal{T}}\|^2-\tilde z_{\mathcal{T}}^\top\bar E_\mathcal{T}^\top\tilde F(x,\hat x) \\
 &-\sigma_1 \|\tilde w\|^2+\|\tilde w\|\|\dot w\|+\sigma_1\|\tilde w\|\| \bar w\|\\
  \leq&-c_1'\|\tilde z_\mathcal{T}\|^2-\sigma_1' \|\tilde w\|^2+d\\
  \leq& -c_1''\|\xi\|^2+d,
\end{aligned}
	\end{equation}
  where $ \quad c_1'$ is defined in \eqref{100}, $\sigma_1':=\sigma_1-0.5(\sigma_1+1)/{\delta^2}>0$, $d:={0.5 \delta^2(\sigma_1 | w_d|^2}+|w_d'|^2)$ and $c_1''=\min\{c_1',\sigma_1'\}$. 
   
	From \eqref{270} and \eqref{231}, the system \eqref{closesystem} is globally uniformly stable \cite{Khalil2002nonlinear} and $\xi$ converges to the set $\Omega:=\{ \xi: \|\xi\| \leq \sqrt{{d}/{c_1''} } \}$. Therefore, the solutions $\xi(t)$ are ultimately bounded from Theorem 4.18 in \cite{Khalil2002nonlinear}.
	
	Choose the Lyapunov function $ V_5(\hat{z}):=0.5\|\hat{z}\|^2$. Its derivative \eqref{313} along the trajectories of the auxiliary system \eqref{313} satisfies	  \begin{equation}\label{237}\begin{aligned}
	\dot{ V}_5(\hat z)=&-c_2\hat z^\top \hat z+\hat z^\top \bar E^\top F(\hat x)+\hat z^\top \psi(t,\tilde z_{\mathcal{T}})\\
 \leq&-c_2\|\hat z\|^2+L_f\|\hat z\|^2+\|\hat z\|\|\psi(t,\tilde z_{\mathcal{T}})\|\\
	\leq&-c_2'|\hat z|^2+|\kappa(\|\tilde z_{\mathcal{T}}\|)|^2	\leq-c_2'\|\hat z\|^2+\sigma,
	\end{aligned}\end{equation}
where $c_2':=c_2-L_f-0.25$.   The third inequality is obtained by \eqref{258} and Young's inequality. As the solution $\tilde z_{\mathcal{T}}(t)$ is uniformly  stable, there exists  $\sigma>0$ such that  $|\kappa(\|\tilde z_{\mathcal{T}}\|)|^2\leq\sigma$  for all $t\geq0$. The last inequality follows.
	Similarly, from \eqref{237} the solutions $\hat z(t)$, are ultimately bounded. 
 
 In Lemma~\ref{lemma1} in Appendix~\ref{apx:dPE},   $x$ and $w$ in \eqref{app_sys} correspond to $[{\tilde z}_\mathcal{T}^\top\;\;   \tilde w^\top]^\top$ and $\hat z$ respectively here. From \eqref{313}, \eqref{258} and \eqref{closesystem}, the inequalities \eqref{n751} and \eqref{n752} hold. \eqref{313} implies $f_1(t,w)\leq l\|w\|$ with $l:=c_2+L_f$ in Lemma~\ref{lemma1}.  Based on the boundedness of $\xi$ and $\hat z(t)$, \eqref{n755} holds. Now, since all the assumptions in Lemma~\ref{lemma1} in Appendix~\ref{apx:dPE} are satisfied,     $B\hat Z(t)$, given by the update law \eqref{313}, is u$\delta$-PE with respect to $\tilde z_\mathcal{T}$. 

	Next, we will analyze the stability of \eqref{closesystem}.   
      For $A(t,x_1)$ in \eqref{closesystem}, there exists a function  $\rho_1(\|x_1\|):=k\|x_1\|$ where $k:=\max\{ L_f+\|c_1I+ \bar E_\odot \bar W \bar E\|, L_f+\| c_1I+\bar E_\mathcal{T}^\top \bar E_\odot \bar W R^\top\|\}$, such that Assumption~\ref{n422} is satisfied. 
      Along the trajectories of \eqref{closesystem_un}, the derivative of $V_1(t,\xi)$ defined as \eqref{270} is 
 \begin{equation}
    \label{lya_v1}
    \begin{aligned}       
 \dot V_1(t,\xi)=&-(c_1I+c{\bar L_e}))\tilde z_{\mathcal{T}}^\top \tilde z_{\mathcal{T}}-\tilde z_{\mathcal{T}}^\top\bar E_\mathcal{T}^\top\tilde F(x,\hat x)\\
    \leq &-c_1'\|\tilde z_{\mathcal{T}}\|^2.
     \end{aligned}
\end{equation} where $c_1'$ is defined in \eqref{100}. Hence, $V_1(t,\xi)$ satisfies Assumption ~\ref{n425} with $\alpha_1=\alpha_2:=\frac{1}{2}$ and $\alpha_3:=c_1'$. Now that all the assumptions of Lemma~\ref{USPAS} hold,  the origin of system \eqref{closesystem_un} is concluded to be uniformly asymptotically stable.

     Consider again $V(t,\xi)$ defined in \eqref{303}. In order to notationally distinguish the derivatives of $V_i(t,\xi)$  along the trajectories of  \eqref{closesystem} and \eqref{closesystem_un}, we denote $\dot V'_i(t,\xi)$ as the derivative of $V_i(t,\xi)$ for \eqref{closesystem}  while $\dot V_i(t,\xi)$ corresponds to \eqref{closesystem_un}, where $i=1,2,3,4$. Denote  $\Delta V'_i=\dot V'_i(t,\xi)-\dot V_i(t,\xi)$.
     Based on \eqref{closesystem}, \eqref{closesystem_un},  \eqref{v2}, \eqref{v3} and \eqref{lya_v1},  $\Delta V'_i(t,\xi)$ is  
     \begin{equation}   \label{n111}  \begin{aligned}
        \Delta V'_1=&-\sigma_1 \| \tilde w\|^2+ \tilde w^\top\dot {\bar w}+\sigma_1  \tilde w^\top\bar w\\
            \Delta V_2'=& -\tilde z_\mathcal{T}^\top B \Phi(t,\tilde z_\mathcal{T})^\top( \dot {\bar w}+\sigma_1 \bar w-\sigma_1 \tilde w)\\
             \Delta V_3'= &-\int_{t}^{\infty}{2e^{(t-\tau)}\Phi(\tau,\tilde z_\mathcal{T})B^\top B\Phi(\tau,\tilde z_\mathcal{T})^\top \tilde w} d\tau\\
            & \cdot ( \dot {\bar w}+\sigma_1 \bar w-\sigma_1 \tilde w).
        \end{aligned} \end{equation} 
According to  \eqref{closesystem_un} and \eqref{n111}, using Young's inequality, we have
  \begin{equation} \label{n211}\begin{aligned}
        \Delta V'_1\leq&-(\sigma_1-\delta^2-\delta^2 \sigma_1^2)| \tilde w|^2+\frac{\|\dot {\bar w}\|^2}{4\delta^2}+\frac{\| {\bar w}\|^2}{4\delta^2}.\\
        \varepsilon \Delta V'_2\leq&\varepsilon(b^2_{\rho}+b^2_{\rho}\sigma_1^2+\frac{\sigma_1^2}{4\delta^2})\| \tilde z_\mathcal{T}\|^2+\varepsilon \delta^2\| \tilde w\|^2\\
       & +\varepsilon\frac{\|\dot {\bar w}\|^2}{4}+\varepsilon\frac{\| {\bar w}\|^2}{4}\\
        \varepsilon \Delta V'_3\leq& \varepsilon(\sigma^2_1 b^4_{\rho}+b^4_{\rho}+2\sigma_1b_{\rho}^2)\| \tilde w\|^2+\varepsilon \|\dot {\bar w}\|^2+\varepsilon\| {\bar w}\|^2,
                 \end{aligned}   \end{equation} 
where $\delta>0$, $b_\rho:=b\rho(\Delta)$ defined in \eqref{333} and  $b:=\| \bar E_\mathcal{T}^\top \bar E_\odot \|$ for all $(t,x)\in\mathbb{R}\times\mathbb{B}(\Delta)$.

    For the closed-loop system  \eqref{closesystem}, based on \eqref{353} and \eqref{n211},  $\dot V'(t,\xi)$ becomes
 \begin{equation}\begin{aligned}
     \label{351}
     \dot V'(t,\xi)	= & \dot V(t,\xi)+\Delta V'_1+\varepsilon \Delta V'_2+\varepsilon \Delta V'_3\\
       \leq &  -\alpha \| \tilde z_\mathcal{T}\|^2 -\beta\|\tilde w\|^2+(\delta^2-\sigma_1+\delta^2 \sigma_1^2)\| \tilde w\|^2\\
       &+\frac{\| {\bar w}\|^2}{4\delta^2}+\varepsilon (2\sigma_1  b^2_{\rho}+\sigma^2_1 b^4_{\rho}+b^4_{\rho}+\delta^2) \| \tilde w\|^2\\
      & +\varepsilon(b^2_{\rho}+b^2_{\rho}\sigma_1^2+\frac{\sigma_1^2}{4\delta^2})\| \tilde z_\mathcal{T}\|^2 +\frac{\|\dot {\bar w}\|^2}{4\delta^2}\\
      &+\varepsilon\frac{\|\dot {\bar w}\|^2}{4}+\varepsilon\frac{\| {\bar w}\|^2}{4}.\\
     \end{aligned}
 \end{equation}
Choosing $\beta_1:=\beta+\sigma_1-\delta^2 -\sigma_1^2\delta^2-\varepsilon (2\sigma_1  b^2_{\rho}+\sigma^2_1 b^4_{\rho}+b^4_{\rho}+\delta^2)>0$ and $\alpha ':=\alpha-\varepsilon(b^2_{\rho}+b^2_{\rho}\sigma_1^2+0.25\sigma_1^2/{\delta^2})>0$ yields
 \begin{equation}\begin{aligned}
     \label{352}
     \dot V'(t,\xi)	\leq &  -\alpha' \| \tilde z_\mathcal{T}\|^2 -\beta_1\|\tilde w\|^2+d_{\xi}\\
       \leq & -c_3 \|\xi\|^2+d_{\xi},
     \end{aligned}
 \end{equation}
 where $\beta$ is defined in \eqref{353},  $ c_3:=\min\{\alpha',\beta'\}$ and $d_{\xi}:=\sqrt{(1+4\delta^2\varepsilon)(w^2_d+w'^2_d)/{4\delta^2}}$.
 Since $\beta_1$ depends on $\mu$ and $T$ from Definition~\ref{def:dPE}, it is possible to choose $\beta_1 >0$. Parameter $\varepsilon$ is chosen to be sufficiently small and $\alpha_3=c_1'$ can be chosen to be sufficiently large so that $\alpha'>0 $. 
 Therefore, the solution $\xi$  of \eqref{closesystem} converges to  $\Omega_{\xi}:=\{ \xi: \|\xi\| \leq d_{\xi} \}$ with $d_{\xi}:=\sqrt{ {d}/{c_3} }$. The weight estimation errors converge to $\Omega_{\tilde w} :=\{ \tilde w: \|\tilde w\| \leq \sqrt{ d/{c_3} } \}$. 
 Furthermore, we obtain the bound of synchronization errors by  $z=R^{\top}z_{\mathcal{T}}$ and $z_{\mathcal{T}}=\tilde z_{\mathcal{T} }+\hat z_{\mathcal{T}}$. According to \eqref{237}, $\hat z_{\mathcal{T}}$ converges to $\Omega_{\hat z_{\mathcal{T}}}:=\{\hat z_{\mathcal{T}}: \|\hat z_{\mathcal{T}}\| \leq  d_{\hat z_{\mathcal{T}}}=\|\rho(\|d_{\xi}\|)\|/{c_2'}  \}$. The edge state $z$ thus converges to $\Omega_z:=\{ z: \|z\| \leq \|R^{\top}\|\|d_z\| \}$ where $d_z=[d_{\xi}\quad  d_{\hat z_{\mathcal{T}}}]$.
 \end{proof}

\section{Simulation}\label{sec:simulations}
We consider a network \eqref{108} with 6 agents with a time-varying communication topology with  $\bar w(t)$ in \eqref{99} as
\begin{equation*}\begin{aligned}
    \bar w(t)=&[0.7+0.02\sin(0.02t),0.8+0.1\cos(0.01t),0.6+\\&0.02\sin(0.5\pi t),0.25,0.4,
    0.02\cos(0.05\pi t)+0.45,\\&\emph{0}_{1\times15},0.05\cos(0.01\pi t)+0.3,0.6,0.2,\emph{0}_{1\times 5},0.5]^\top\\\end{aligned}
\end{equation*}
where $\emph{0}_{1\times N}$ denotes $N$-dimensional zero row vector.
 Here, we simulate the network \eqref{108} with $f_i(x_i)=x_i$ and $c=1$.  Use control input \eqref{166} and weight updating law \eqref{170_n} and design the auxiliary system \eqref{313}.  The control gains are chosen as $c_1=2, c_2=1.3, \sigma_1=0.001$. Choose a u$\delta$-PE function from \eqref{313} referring to \cite{restrepo2023simultaneous}  as
 \begin{align*}
    \psi(t,\tilde z_\mathcal{T})= &(\bar E^\top)^+\tanh(\kappa\bar E_{\mathcal{T}} \tilde z_{\mathcal{T}})p(t),\\
p(t)=&5\sin(0.5\pi t)+4\cos(2\pi t)-6\sin(8\pi t)+\sin(\pi t)\\
  &- 4\cos(10\pi t)+2\cos(6\pi t)+3\sin(3\pi t).\end{align*}

The simulation results are presented in Figs~\ref{fig:weight}-\ref{fig:synchronized_zk}. Fig~\ref{fig:weight} and \ref{fig:weight_errors} represent the evolution of the estimated weights and the errors between the estimated weight and the time-varying weight.  Fig~\ref{fig:zstate} shows the evolution of synchronization errors $z$. Fig~\ref{fig:synchronized_zk} displays the evolution of state   $\tilde z_{\mathcal{T}}$. 
As expected from Proposition~\ref{propUSPAS}, the estimation errors, $\tilde z_{\mathcal{T}}$ and synchronization errors $z$ are bounded from Fig.~\ref{fig:weight_errors}, \ref{fig:zstate} and \ref{fig:synchronized_zk} under the time-varying topology. From Figs.~\ref{fig:weight} and \ref{fig:weight_errors}, the real time-varying weights are in the line segments whose centres are the predicted weight in Fig.~\ref{fig:weight} and whose radii are the weight estimation errors. Another observation is that the bounds of the synchronization errors $z$ are bigger than the bounds of the estimation weight errors $\tilde w$, which responds to the analysis in the proof part of Proposition~\ref{propUSPAS}. We also tried different values of  $c_2$, and we found that increasing the value of $c_2$ could get lower synchronization errors while increasing the bound of the weight estimation errors, which correspond to the form \eqref{313} of the auxiliary system $\hat z$. Hence, keeping a certain level of excitation for $\tilde z_{\mathcal{T}}$ is beneficial to estimating the time-varying weights, while it deteriorates the synchronization.

\begin{figure*}[!h]
	\centering
 \begin{minipage}[b]{0.2\textwidth}
 \centering
\includegraphics[scale=0.3]{./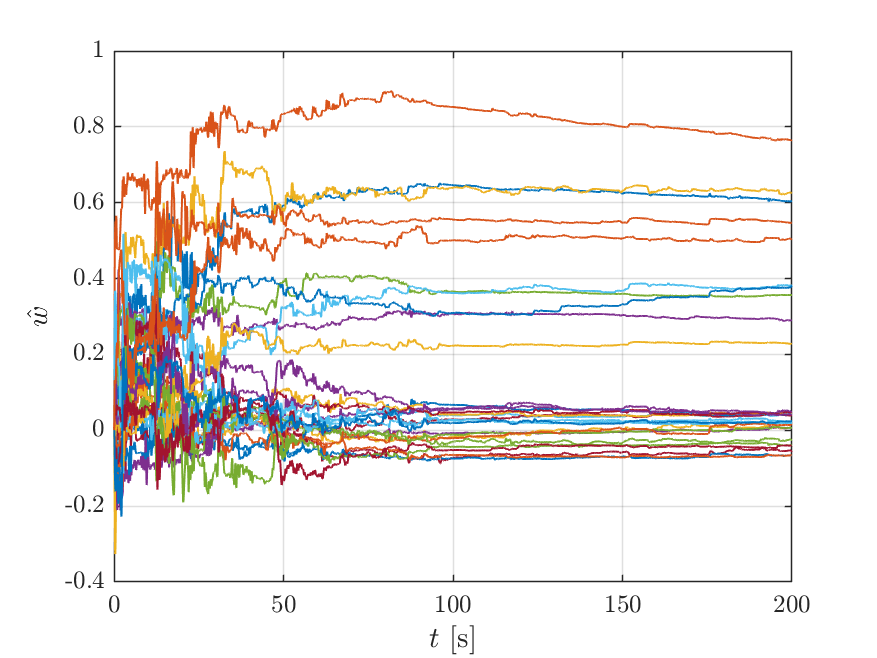}
		\caption{Estimated weight of time-varying topology}
		\label{fig:weight}
	\end{minipage}
	\hfill
 \begin{minipage}[b]{0.2\textwidth}
		\centering
		\includegraphics[scale=0.3]{./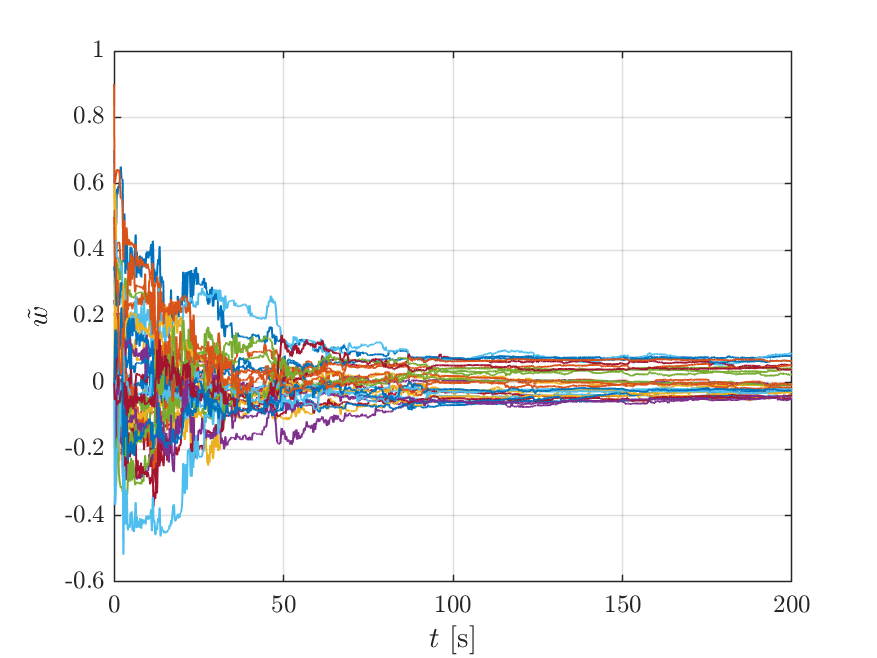}
		\caption{Estimation errors of  time-varying weights.}
		\label{fig:weight_errors}
	\end{minipage}
	\hfill
  \begin{minipage}[b]{0.2\textwidth}
	\includegraphics[scale=0.3]{./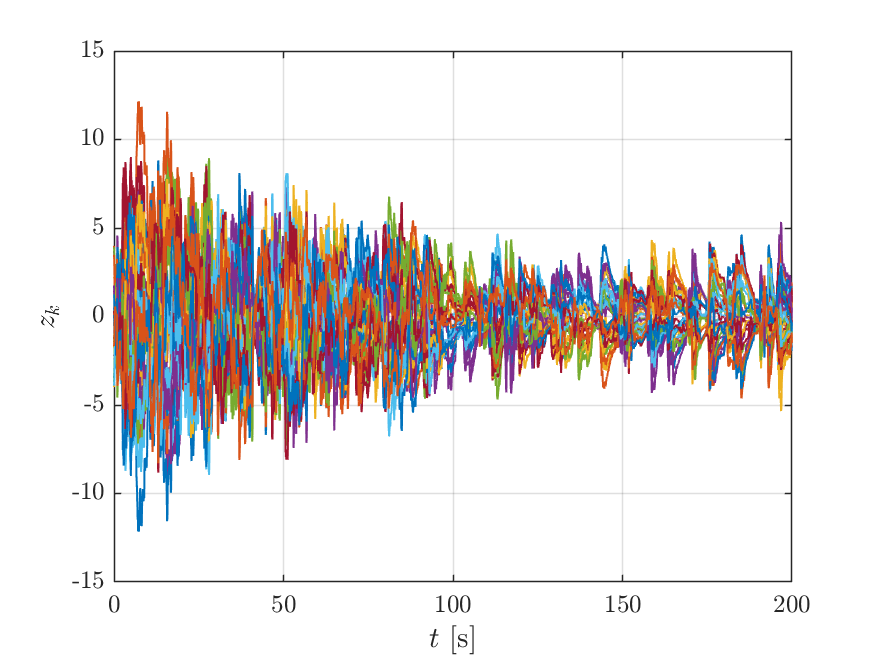}
	\caption{Evolution of synchronization errors $z$}
	\label{fig:zstate}
	\end{minipage}
	\hfill
 \begin{minipage}[b]{0.2\textwidth}
	\centering
	\includegraphics[scale=0.3]{./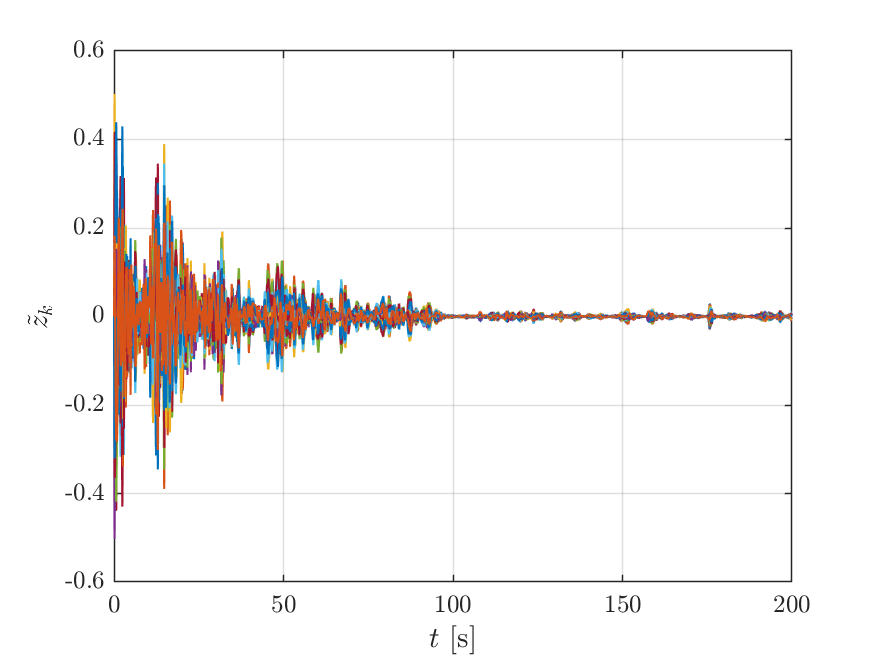}
	\caption{Evolution of state $\tilde z_{\mathcal{T}}$}
	\label{fig:synchronized_zk}
\end{minipage}
\end{figure*}

\section{Conclusions}\label{sec:conc}

In this paper, we introduce an adaptive control-based approach for simultaneous estimation of time-varying topology and synchronization of a complex dynamical network. We design an adaptive-control-based scheme to stimulate the system to ensure the boundedness of topology estimation errors. This is achieved through the development of an auxiliary system characterized by either persistent excitement or uniform $\delta$ persistent excitement.  The first auxiliary system which is PE, enables us to bound the edge weight estimation errors. The latter one which is uniformly-$\delta$ persistently exciting gives the boundedness of both weight estimation errors and synchronization errors, provided the weights and their derivatives are bounded. In terms of further work, we aim to enhance the topology estimation performance while considering control tasks under time-varying topology.

\begin{appendices}\label{sec:apenn}

\section{On $\delta$-persistency of excitation}\label{apx:dPE}

\begin{definition}[Persistency of excitation ]\label{Definition:PE}
	 A function $\phi:\mathbb{R}_{\geq 0}\to\mathbb{R}^{n\times m}$ is said to be persistently exciting, if there exist positive $T$ and $\mu$ such that  for any unit vector $v\in R^m$,
	\begin{equation}\label{eq:PE}
	\int_{t}^{t+T}  \| \phi(\tau)v\|^2 d\tau \geq\mu, \forall t\geq0.
	\end{equation}
\end{definition}

Partition $x\in\mathbb{R}^n$ as $x:=\left[x_1^\top\;\; x_2^\top\right]^\top$ where $x_1\in\mathbb{R}^{n_1}$ and $x_2\in\mathbb{R}^{n_2}$. 
Define the set $\mathcal{D}_1:=\left(\mathbb{R}^{n_1}\backslash\{0\}\right)\times\mathbb{R}^{n_2}$ and the function $\phi:\mathbb{R}\times\mathbb{R}^n\to\mathbb{R}^m$ where $t\mapsto\phi(t,x)$ is locally integrable. The following defines uniformly $\delta$-persistently exciting from \cite[Lemma~1]{loria2005matrosov}.

\begin{definition}\label{def:dPE}
	[Uniformly $\delta$-persistency of excitation]
	If $x\mapsto\phi(t,x)$ is continuous uniformly in $t$, then $\phi(\cdot,\cdot)$ is uniformly $\delta$-persistently exciting (u$\delta$-PE) with respect to $x_1$ if and only if for each $x\in\mathcal{D}_1$ there exist positive $T$ and $\mu$ such that for any unit vector $v\in R^m$
	\begin{equation}\label{eq:dPE}
	\int_{t}^{t+T}\|\phi(\tau,x)v\|^2 d\tau\geq\mu, \forall t\geq0.
	\end{equation}
\end{definition}

The next Lemma establishes that when a strictly proper stable filter is subject to a bounded disturbance, and driven by a u$\delta$-PE input, its output retains the property of being u$\delta$-PE. The lemma, originally introduced in \cite{panteley2001relaxed}, didn't account for the presence of a bounded disturbance. 

\begin{lemma}[Filtration property]\label{lemma1}
	Let $\phi:\mathbb{R}_{\geq 0}\times\mathbb{R}^{n}\to\mathbb{R}^{p\times q}$ and consider the system 
	\begin{equation} \label{app_sys}
	\left[\begin{matrix}
	\dot x\\\dot\omega
	\end{matrix}\right]=\left[\begin{matrix}f(t,x,\omega)\\
	f_1(t,\omega)+f_2(t,x)\omega+\phi(t,x)\end{matrix}\right]
	\end{equation}
	where $f_1:\mathbb{R}_{\geq 0}\times\mathbb{R}^{n}\to\mathbb{R}^{p\times q}$ is Lipschitz in $\omega$ uniformly in $t$ and measurable in $t$ and satisfies $\|f_1(\cdot)\|\leq l \|\omega\|$ for all $t$; $f_2:\mathbb{R}_{\geq 0}\times\mathbb{R}^{n}\to\mathbb{R}^{p\times p}$ is locally Lipschitz in $x$ uniformly in $t$ and measurable in $t$. Assume that $\phi(t,x)$ is u$\delta$-PE with respect to $x$. Assume that $\phi$ is locally Lipschitz and there exists a non-decreasing function $\alpha:\mathbb{R}_{\geq 0}\to\mathbb{R}_{\geq 0}$, such that, for all $(t,x)\in\mathbb{R}_{\geq 0}\times\mathbb{R}^n$:
	\begin{equation}\label{n751}
	\max\left\{\|\phi(\cdot)\|,\|f_2(\cdot)\|,\left\|\frac{\partial \phi(\cdot)}{\partial t}\right\|,\left\|\frac{\partial \phi(\cdot)}{\partial x}\right\|\right\}\leq\alpha(\|x\|).
	\end{equation}
Assume that $f(\cdot)$ satisfies that 
 \begin{equation}\label{n752}
	\max\left\{\|f(\cdot)\|\right\}\leq\alpha(\|x\|)+k,
	\end{equation}
 where  $k$ is a positive constant. Denote $w=(w_1,w_2,\cdots, w_p)^\top$ and $w_i^\top \in R^{q}$ with $i=1,2,\cdots, q$.  
If all solutions $x_\phi(t)$, defined as $x_\phi:=\left[x^\top\;\;\omega_1 \;\;\omega_2 \;\cdots\; \omega_p  \right]^\top$, satisfy
	\begin{equation}\label{n755}
	\|x_\phi(t)\|\leq r \quad \forall t\geq t_0,
	\end{equation}
	for a positive constant $r$, then $\omega$ is uniformly $\delta$-persistently exciting with respect to $x$.
\end{lemma}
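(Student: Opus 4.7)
My plan is to adapt the filtration argument of \cite{panteley2001relaxed} to the present setting, in which the filter is perturbed by the bounded coupling $f_2(t,x)\omega$ and must operate along trajectories $x(\cdot)$ that remain in $\mathcal{D}_1$. The central algebraic identity is obtained by solving the $\omega$-equation for $\phi$:
\begin{equation*}
\phi(t,x) = \dot\omega(t) - f_1(t,\omega(t)) - f_2(t,x)\omega(t).
\end{equation*}
This identity says that any excitation supplied by $\phi$ must be reflected, up to the bounded perturbations $f_1(t,\omega)$ and $f_2(t,x)\omega$, in the time variations of $\omega$, and hence must average into a lower bound on $\int \|\omega v\|^2\,d\tau$.

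As preparation, I would first collect the uniform bounds implied by the hypotheses. The boundedness condition \eqref{n755} gives $\|\omega(t)\|,\|x(t)\| \leq r$, and then \eqref{n751} yields $\|f_2(t,x)\| \leq \alpha(r)$ and $\|\phi(t,x)\| \leq \alpha(r)$ along the trajectory, while the Lipschitz-in-$\omega$ growth of $f_1$ gives $\|f_1(t,\omega)\| \leq l\,r$. Consequently, $\|\dot\omega(t)\|$ is uniformly bounded by some constant $K_1$; moreover, using the bounds in \eqref{n751}--\eqref{n752} on $f$ together with the chain rule applied to $\dot\omega(t)=f_1+f_2\omega+\phi$, the second derivative $\|\ddot\omega(t)\|$ along trajectories is also uniformly bounded by some constant $K_2$.

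Fixing $x_\star \in \mathcal{D}_1$, the u$\delta$-PE hypothesis on $\phi$ furnishes $T_\phi, \mu_\phi > 0$ such that $\int_t^{t+T_\phi} \|\phi(\tau, x_\star) v\|^2\, d\tau \geq \mu_\phi$ for all $t\geq 0$ and every unit $v$. Substituting the algebraic identity above and applying $\|a-b-c\|^2 \leq 3(\|a\|^2+\|b\|^2+\|c\|^2)$ yields
\begin{equation*}
\mu_\phi \leq 3\!\!\int_t^{t+T_\phi}\!\!\! \|\dot\omega(\tau) v\|^2 d\tau + 3(l^2+\alpha(r)^2)\!\!\int_t^{t+T_\phi}\!\!\! \|\omega(\tau) v\|^2 d\tau.
\end{equation*}

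The main obstacle, which I expect to absorb the bulk of the technical work, is to dominate the $\dot\omega$-integral by the $\omega$-integral up to a bounded additive constant. For this, I would use the integration-by-parts identity $\int_t^{t+T_\phi}\! (v^\top \dot\omega)(v^\top \omega)\,d\tau = \tfrac{1}{2}[(v^\top\omega)^2]_t^{t+T_\phi}$, together with the uniform boundedness of $\ddot\omega$ established above, to derive an inequality of the form
\begin{equation*}
\int_t^{t+T_\phi}\!\! \|\dot\omega(\tau) v\|^2\, d\tau \leq C_1 \int_t^{t+T_\phi}\!\! \|\omega(\tau) v\|^2\, d\tau + C_2,
\end{equation*}
with $C_1, C_2$ independent of $t$ and $v$. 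Enlarging the window if needed so that $3C_2 < \mu_\phi$, and combining the two inequalities, I would conclude that $\int_t^{t+T_\omega} \|\omega(\tau) v\|^2 d\tau \geq \mu_\omega$ for suitable $T_\omega,\mu_\omega>0$, which is precisely the u$\delta$-PE property of $\omega$ with respect to $x$ as required by Definition~\ref{def:dPE}.
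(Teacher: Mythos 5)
Your plan correctly identifies the crux — transferring the excitation of $\phi$ to $\omega$ through the filter equation — but the step you yourself call the ``main obstacle'' is precisely where the argument fails under the stated hypotheses. To obtain $\int_t^{t+T}\|\dot\omega(\tau)^\top v\|^2 d\tau\leq C_1\int_t^{t+T}\|\omega(\tau)^\top v\|^2 d\tau+C_2$ by integration by parts you must differentiate $\dot\omega=f_1(t,\omega)+f_2(t,x)\omega+\phi(t,x)$ once more, i.e.\ you need $\partial f_1/\partial t$, $\partial f_1/\partial\omega$, $\partial f_2/\partial t$, $\partial f_2/\partial x$ to exist and be bounded along trajectories. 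The lemma only assumes $f_1,f_2$ \emph{measurable} in $t$ and (locally) Lipschitz in $\omega$, resp.\ $x$; the bound \eqref{n751} controls first derivatives of $\phi$ only. So your claim that $\|\ddot\omega\|\leq K_2$ is not available, and without it the $\dot\omega$-integral cannot be traded for the $\omega$-integral. Two further problems: (i) even granting $\ddot\omega$ bounded, the term $\int_t^{t+T}\|\ddot g\|\,\|g\|\,d\tau$ forces $C_2$ to grow with the window length $T$ (via Cauchy--Schwarz it becomes $K_2\sqrt{T}\,(\int\|g\|^2)^{1/2}$), so ``enlarge the window so that $3C_2<\mu_\phi$'' does not close the argument as stated --- one is left with a quadratic inequality in $(\int\|\omega^\top v\|^2)^{1/2}$ that must be solved explicitly; (ii) the hypotheses bound $\|f_1\|$ by $l\|\omega\|$ and $\|f_2\omega\|$ by $\alpha(r)\|\omega\|$, not by multiples of $\|\omega^\top v\|$, so your intermediate inequality $\mu_\phi\leq 3\int\|\dot\omega v\|^2+3(l^2+\alpha(r)^2)\int\|\omega v\|^2$ is not justified as written.

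The paper's proof is built exactly to avoid second derivatives. It introduces the cross term $\rho:=-v^\top\phi(t,x)\omega^\top v$ and differentiates it \emph{once} along \eqref{app_sys}: the piece $-v^\top\phi\,\dot\omega^\top v$ yields $-\|\phi^\top v\|^2$ directly (this is where the excitation of $\phi$ enters), and every remaining term --- involving $f_1$, $f_2\phi$, $\partial\phi/\partial t$, and $(\partial\phi/\partial x)f$ --- is bounded by a constant times $\|\omega^\top v\|$ using only \eqref{n751}, \eqref{n752}, and the trajectory bound \eqref{n755}. Integrating $\dot\rho$ over $[t,t+k'T]$, bounding $|\rho|$ at the endpoints by $2\alpha(r)r$, invoking the u$\delta$-PE of $\phi$, and applying Cauchy--Schwarz to $\int\|\omega^\top v\|$ then gives the required lower bound on $\int\|\omega^\top v\|^2$ for $k'$ large. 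To repair your proof you would either have to add smoothness assumptions on $f_1,f_2$ that the lemma does not make, or replace your $L^2$ estimate of $\dot\omega$ by this first-order pairing of $\phi^\top v$ against $\omega^\top v$.
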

\begin{proof}
Denote $v\in R^p$ as a unit vector.
	Defining $\rho:=-v^\top \phi\omega^\top v$,  we have\begin{equation}\label{n748}\begin{aligned}[b]
	\dot\rho=&-\|\phi^\top v\|^2-v^\top \phi f_1^\top v-v^\top \left[f_2\phi+\frac{\partial\phi}{\partial t}+\frac{\partial\phi}{\partial x}f\right]\omega^\top v\\
	\leq&-\|\phi^\top v\|^2+\|\omega ^\top v\|\left[2\alpha^2(r)+(l+k+1)\alpha(r)\right]\|v\|\\
 =&-\|\phi^\top v\|^2+c(r)\|\omega^\top v\|,
	\end{aligned}\end{equation}
where $c(r):=2\alpha^2(r)+(l+k+1)\alpha(r)$.
 Integrating both sides of 
  \eqref{n748}
 from $t$ to $t+T_f$ and then reversing the inequality sign, we derive that
	\begin{equation}\label{n767}\begin{aligned}[b]
	v^\top \phi(t,x) \omega(t)^\top&v- v^\top\phi(t+T_f,x) \omega(t+T_f)^\top v\\
 \geq\int_{t}^{t+T_f}\|\phi(\tau,x)^\top& v\|^2d\tau - \int_{t}^{t+T_f}c(r)\|\omega(\tau)^\top v\|d\tau.
	\end{aligned}\end{equation}
	By applying the bounds in \eqref{n751}, \eqref{n752} and \eqref{n755} to the left-hand side of inequality \eqref{n767}, we have
	\begin{equation*}
	2\alpha(r)r\geq\int_{t}^{t+T_f}\|\phi(\tau,x)^\top v\|^2d\tau - \int_{t}^{t+T_f}c(r)\|\omega(\tau)^\top v\|d\tau.
	\end{equation*}
	Let $T_f:=k'T$. Since $\phi(t,x)$ is u$\delta$-PE from \eqref{eq:dPE}, there exists $\mu$ such that
	\begin{equation*}
	\int_{t}^{t+k'T}\|\phi(\tau,x)^\top v\|^2d\tau\geq k'\mu.
	\end{equation*}
	Thus, we obtain
\begin{equation*}\begin{aligned}[b]
\int_{t}^{t+k'T}\|\omega(\tau)^\top v\|^2 d\tau\geq \frac{\left(k'\mu-2\alpha(r)r\right)^2}{c(r)^2}=:\mu_r.
	\end{aligned}\end{equation*}
	Choosing $k'$ large enough so that $\mu_r>0$, $\omega(t)$ is u$\delta$-PE with respect to $x$.
\end{proof}

\end{appendices}

% %------------
\bibliographystyle{IEEEtran}
%\bibliography{./ref}

% Generated by IEEEtran.bst, version: 1.14 (2015/08/26)
\begin{thebibliography}{10}
\providecommand{\url}[1]{#1}
\csname url@samestyle\endcsname
\providecommand{\newblock}{\relax}
\providecommand{\bibinfo}[2]{#2}
\providecommand{\BIBentrySTDinterwordspacing}{\spaceskip=0pt\relax}
\providecommand{\BIBentryALTinterwordstretchfactor}{4}
\providecommand{\BIBentryALTinterwordspacing}{\spaceskip=\fontdimen2\font plus
\BIBentryALTinterwordstretchfactor\fontdimen3\font minus \fontdimen4\font\relax}
\providecommand{\BIBforeignlanguage}[2]{{%
\expandafter\ifx\csname l@#1\endcsname\relax
\typeout{** WARNING: IEEEtran.bst: No hyphenation pattern has been}%
\typeout{** loaded for the language `#1'. Using the pattern for}%
\typeout{** the default language instead.}%
\else
\language=\csname l@#1\endcsname
\fi
#2}}
\providecommand{\BIBdecl}{\relax}
\BIBdecl

\bibitem{mesbahi_graph_2010}
M.~Mesbahi and M.~Egerstedt, \emph{Graph theoretic methods in multiagent networks}.\hskip 1em plus 0.5em minus 0.4em\relax Princeton University Press, 2010.

\bibitem{moon2015general}
J.-Y. Moon, U.~Lee, S.~Blain-Moraes, and G.~A. Mashour, ``General relationship of global topology, local dynamics, and directionality in large-scale brain networks,'' \emph{PLoS computational biology}, vol.~11, no.~4, p. e1004225, 2015.

\bibitem{luppi2021combining}
A.~I. Luppi and E.~A. Stamatakis, ``Combining network topology and information theory to construct representative brain networks,'' \emph{Network Neuroscience}, vol.~5, no.~1, pp. 96--124, 2021.

\bibitem{shilts2022physical}
J.~Shilts, Y.~Severin, F.~Galaway, N.~M{\"u}ller-Sienerth, Z.-S. Chong, S.~Pritchard, S.~Teichmann, R.~Vento-Tormo, B.~Snijder, and G.~J. Wright, ``A physical wiring diagram for the human immune system,'' \emph{Nature}, vol. 608, no. 7922, pp. 397--404, 2022.

\bibitem{nabi2012network}
M.~Nabi-Abdolyousefi and M.~Mesbahi, ``Network identification via node knockout,'' \emph{IEEE Transactions on Automatic Control}, vol.~57, no.~12, pp. 3214--3219, 2012.

\bibitem{zhou2007topology}
J.~Zhou and J.-a. Lu, ``Topology identification of weighted complex dynamical networks,'' \emph{Physica A: Statistical Mechanics and Its Applications}, vol. 386, no.~1, pp. 481--491, 2007.

\bibitem{zhu2021new}
S.~Zhu, J.~Zhou, G.~Chen, and J.-A. Lu, ``A new method for topology identification of complex dynamical networks,'' \emph{IEEE Transactions on Cybernetics}, vol.~51, no.~4, pp. 2224--2231, 2021.

\bibitem{timme2014revealing}
M.~Timme and J.~Casadiego, ``Revealing networks from dynamics: an introduction,'' \emph{Journal of Physics A: Mathematical and Theoretical}, vol.~47, no.~34, p. 343001, 2014.

\bibitem{kolar2010estimating}
M.~Kolar, L.~Song, A.~Ahmed, and E.~P. Xing, ``Estimating time-varying networks,'' \emph{The Annals of Applied Statistics}, pp. 94--123, 2010.

\bibitem{li2023identifying}
K.~Li, D.~Yang, C.~Shi, and J.~Zhou, ``Identifying the switching topology of dynamical networks based on adaptive synchronization,'' \emph{Chaos: An Interdisciplinary Journal of Nonlinear Science}, vol.~33, no.~12, 2023.

\bibitem{wang2023finite}
N.~Wang and D.~V. Dimarogonas, ``Finite-time topology identification for complex dynamical networks,'' in \emph{2023 62nd IEEE Conference on Decision and Control (CDC)}.\hskip 1em plus 0.5em minus 0.4em\relax IEEE, 2023, pp. 425--430.

\bibitem{restrepo2023simultaneous}
E.~Restrepo, N.~Wang, and D.~V. Dimarogonas, ``Simultaneous topology identification and synchronization of directed dynamical networks,'' \emph{IEEE Transactions on Control of Network Systems}, 2023.

\bibitem{talaei2021mathematical}
K.~Talaei, S.~A. Garan, B.~d.~M. Quintela, M.~S. Olufsen, J.~Cho, J.~R. Jahansooz, P.~K. Bhullar, E.~K. Suen, W.~J. Piszker, N.~R. Martins \emph{et~al.}, ``A mathematical model of the dynamics of cytokine expression and human immune cell activation in response to the pathogen staphylococcus aureus,'' \emph{Frontiers in Cellular and Infection Microbiology}, vol.~11, p. 711153, 2021.

\bibitem{mukherjee_robustness_2018}
D.~Mukherjee and D.~Zelazo, ``Robustness of consensus over weighted digraphs,'' \emph{IEEE Transactions on Network Science and Engineering}, vol.~6, no.~4, pp. 657--670, 2019.

\bibitem{morgan1977stability}
A.~Morgan and K.~Narendra, ``On the stability of nonautonomous differential equations $\dot x=a+b(t)x$, with skew symmetric matrix $b(t)$,'' \emph{SIAM Journal on Control and Optimization}, vol.~15, no.~1, pp. 163--176, 1977.

\bibitem{narendra2012stable}
K.~Narendra and A.~Annaswamy, \emph{Stable Adaptive Systems}.\hskip 1em plus 0.5em minus 0.4em\relax Dover Publications, 2012.

\bibitem{NARENDRA1987PE}
K.~S. Narendra and A.~M. Annaswamy, ``Persistent excitation in adaptive systems,'' \emph{International Journal of Control}, vol.~45, no.~1, pp. 127--160, 1987.

\bibitem{Khalil2002nonlinear}
H.~K. Khalil, \emph{Nonlinear systems}.\hskip 1em plus 0.5em minus 0.4em\relax Prentice Hall, 2002.

\bibitem{panteley2001relaxed}
E.~Panteley, A.~Lor\'ia, and A.~Teel, ``Relaxed persistency of excitation for uniform asymptotic stability,'' \emph{IEEE Transactions on Automatic Control}, vol.~46, no.~12, pp. 1874--1886, 2001.

\bibitem{loria2005matrosov}
A.~Lor\'ia, E.~Panteley, D.~Popovi\'c, and A.~Teel, ``A nested {Matrosov} theorem and persistency of excitation for uniform convergence in stable nonautonomous systems,'' \emph{IEEE Transactions on Automatic Control}, vol.~50, no.~2, pp. 183--198, 2005.

\end{thebibliography}
% Generated by IEEEtran.bst, version: 1.14 (2015/08/26)

%------------

\end{document}